\newtheorem{theorem}{Theorem}
\newtheorem*{theorem*}{Theorem}
\newtheorem{fact}[theorem]{Fact}
\newtheorem{proposition}[theorem]{Proposition}
\newtheorem{claim}[theorem]{Claim}
\newtheorem{corollary}[theorem]{Corollary}
\newtheorem*{corollary*}{Corollary}
\newtheorem{lemma}[theorem]{Lemma}
\theoremstyle{definition}
\newtheorem{definition}[theorem]{Definition}
\newtheorem*{remark}{Remark}
\begin{document}

\newcommand{\ignore}[1]{}

\title{A Duality Between Depth-Three Formulas and \\ Approximation by Depth-Two}
\date{}
\author{
Shuichi Hirahara%
\footnote{Department of Computer Science, The University of Tokyo, Tokyo, Japan, \texttt{\href{mailto:hirahara@is.s.u-tokyo.ac.jp}{hirahara@is.s.u-tokyo.ac.jp}}}
}

\maketitle

\newcommand{\cor}[1][]{D_{#1}}
\newcommand{\CNF}[1]{\mathrm{CNF}_{#1}}
\newcommand{\PAR}{\textsc{Parity}}
\newcommand{\MAJ}{\textsc{Maj}}
\newcommand{\lessthan}[2]{{#1}^{-1}(1) \subset {#2}^{-1}(1)}

\newcommand{\headline}[1]{\subsection{#1}}
\newcommand{\smallheadline}[1]{\vspace{0.0em}\textbf{#1.}}

\begin{abstract}
  We establish an explicit link 
  between
  depth-3 formulas and
  one-sided-error approximation by depth-2 formulas,
  which were previously studied independently.
  Specifically,
  we show that 
  the minimum size of
  depth-3 formulas
  is (up to a factor of $n$) equal to
  the inverse of 
  the maximum, over all depth-2 formulas, of
  one-sided-error correlation bound divided by the size of the depth-2 formula,
  on a certain hard distribution.
  We apply this duality to obtain several consequences:
  \begin{enumerate}
    \item
      Any function $f$ can be approximated by a CNF formula of size $O(\epsilon 2^n / n)$
      with one-sided error and advantage $\epsilon$ for some $\epsilon$,
      which is tight up to a constant factor.
    \item
      There exists a monotone function $f$ such that
      $f$ can be approximated by some polynomial-size CNF formula,
      whereas any \emph{monotone} CNF formula approximating $f$ requires exponential size.
    \item
      Any depth-3 formula computing the parity function requires 
      $\Omega(2^{2 \sqrt{n}})$ gates,
      which is tight up to a factor of $\sqrt n$.
      This establishes a quadratic separation
      between depth-3 \emph{circuit} size and depth-3 \emph{formula} size.
    \item
      We give a characterization 
      of the depth-3 monotone circuit complexity of the majority function,
      in terms of 
      a natural extremal problem on hypergraphs.
      In particular, we show that
      a known extension of Tur\'an's theorem
      gives 
      a tight (up to a polynomial factor) circuit size for
      computing the majority function by
      a monotone depth-3 circuit with bottom fan-in 2.
    \item
      $\AC^0[p]$ has exponentially small one-sided correlation
      with the parity function for odd prime $p$.
  \end{enumerate}
\end{abstract}


\section{Introduction}

The main theme of this paper is a new connection between
approximation by depth-2 formulas
and
exact computation of depth-3 formulas.
We first review these two (previously independent) lines of research.

\headline{Approximation by Depth-2 Formulas}
A depth-2 formula is a CNF formula or a DNF formula.
By De Morgan's law, we can often assume without loss of generality that 
a depth-2 formula is a CNF formula; hence, in this paper, we will focus on CNF formulas.
The size $|\phi|$ of a CNF formula $\phi$ is the number of clauses in the formula.

A CNF formula is one of the simplest computational models in complexity theory,
and its computational power is well understood;
for example, the parity function requires CNF formulas of size $2^{n-1}$, which is exactly tight;
any function can be computed by a CNF formula of size $2^{n-1}$, and hence the parity function
is the hardest problem against a CNF formula.

Nonetheless, the situation becomes quite different when,
instead of exact computation, 
\emph{approximation} by CNF formulas is concerned.
Namely, we allow a CNF formula to err on some fraction of inputs.
We say that a CNF formula $\phi$ $\epsilon$-approximates a function $f$
if $\Pr[f(x) \neq \phi(x)] < \epsilon$.
Initiated by O'Donnell and Wimmer \cite{OW07},
a line of research \cite{OW07,BT15,BHST14}
has highlighted 
surprising differences between
exact computation of depth-2 formulas
and 
approximation by depth-2 formulas.
Two of them are reviewed below.

\smallheadline{Quine's Theorem for Approximation}
Blais, H{\aa}stad, Servedio, and Tan \cite{BHST14} studied
whether an analog of Quine's theorem holds.
Quine's theorem states that the minimum CNF formula computing a monotone function is 
achieved by monotone CNF formulas (\ie CNF formulas without any negated literal).
\begin{theorem*}
  [Quine \cite{Qui54}]
  A smallest CNF formula computing a monotone function \emph{exactly}
  is monotone as well.
\end{theorem*}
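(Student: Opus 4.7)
The plan is to reduce the statement to a structural fact about the \emph{prime implicates} of $f$. Recall that a clause $C$ is an implicate of $f$ if $f \le C$ pointwise, and it is \emph{prime} if no proper sub-clause of $C$ is still an implicate. The first step is standard: any minimum-size CNF $\phi = \bigwedge_k C_k$ computing $f$ can, without loss of generality, be assumed to consist entirely of prime implicates. Indeed, each $C_k$ is itself an implicate, so we may replace it by any prime sub-implicate $C_k' \subseteq C_k$ without changing the number of clauses. The resulting formula $\phi' := \bigwedge_k C_k'$ still equals $f$: on one hand $\phi' \ge f$ because each $C_k'$ is an implicate; on the other hand $\phi' \le \bigwedge_k C_k = f$ because $C_k' \Rightarrow C_k$.

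The heart of the argument is the following lemma, which is where monotonicity of $f$ enters: \emph{every prime implicate of a monotone function contains no negated literal}. Suppose, for contradiction, that $C = \bigvee_{i \in P} x_i \,\vee\, \bigvee_{j \in N} \bar{x_j}$ is a prime implicate of $f$ with $N \neq \emptyset$ (and WLOG $P \cap N = \emptyset$, else $C$ is a tautology). I would show that the positive sub-clause $C^* := \bigvee_{i \in P} x_i$ is already an implicate of $f$, which contradicts primality. If $C^*$ were not an implicate, one could pick $x$ with $f(x) = 1$ and $C^*(x) = 0$, and form $y$ by setting $y_j := 1$ for every $j \in N$ while leaving the other coordinates of $x$ unchanged. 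Monotonicity gives $f(y) \ge f(x) = 1$. But $y_i = x_i = 0$ for all $i \in P$, and $\bar{y_j} = 0$ for all $j \in N$, so $C(y) = 0$, contradicting the fact that $C$ is an implicate.

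Putting the two pieces together finishes the proof: $\phi'$ is a minimum-size CNF for $f$, each of its clauses is a prime implicate of the monotone function $f$, hence (by the lemma) each clause is monotone, and so $\phi'$ itself is a monotone CNF formula of minimum size. I do not foresee a genuine obstacle; the only real content is the ``flip the negated coordinates up to $1$'' trick used in the lemma, which is exactly the place where monotonicity of $f$ is exploited. The only bookkeeping issues are trivial edge cases (constant $f$, tautological clauses), which can all be absorbed into the initial normalization step.
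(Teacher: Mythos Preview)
The paper does not actually prove this statement; it merely quotes Quine's theorem as background and cites \cite{Qui54}, so there is no in-paper proof to compare against. Your argument is correct and is the standard classical proof: reduce to prime implicates, then use monotonicity of $f$ via the ``flip the negated coordinates up to $1$'' trick to show every prime implicate is negation-free. The only cosmetic remark is that after replacing each $C_k$ by a prime sub-implicate some clauses could coincide, but this is harmless since you only need a monotone CNF with the \emph{same} number of clauses, not fewer.
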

Given this fact, it is tempting to guess that 
a smallest CNF formula \emph{approximating} a monotone function
is monotone.
However, in \cite{BHST14}, it was shown that there is
at least a quadratic gap between
the size of
the smallest CNF formula
and
the smallest monotone CNF formula
approximating a monotone function $f$.
\begin{theorem*}
  [\cite{BHST14}]
  There are a parameter $\epsilon(n)$ and a monotone function on $n$ variables that can be $\epsilon(n)$-approximated by some \emph{nonmonotone} CNF formula of size $O(n)$,
  but cannot be $\epsilon(n)$-approximated by any \emph{monotone} CNF formula of size less than $n^2$.
\end{theorem*}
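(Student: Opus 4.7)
My approach would be to exhibit a specific monotone function $f$ on $n$ variables together with an explicit nonmonotone CNF $\phi$ of size $O(n)$ such that $\Pr[\phi(x) \neq f(x)] < \epsilon(n)$, and then show that every monotone CNF $\psi$ of size less than $n^2$ satisfies $\Pr[\psi(x) \neq f(x)] \geq \epsilon(n)$. The first step is the construction of $f$; the upper bound then becomes a direct calculation once $\phi$ is specified; the lower bound is the core technical argument.

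For the lower bound, the key structural observation I would exploit is that a monotone CNF $\psi = \bigwedge_{i=1}^s C_i$ with $C_i = \bigvee_{j \in S_i} x_j$ has $\psi^{-1}(0) = \bigcup_{i=1}^s B_i$, where $B_i := \{x \in \{0,1\}^n : x_j = 0 \text{ for all } j \in S_i\}$ is a ``zero-pinning'' subcube. Thus approximating $f$ by a monotone CNF of size $s$ amounts to covering (most of) $f^{-1}(0)$ by a union of $s$ such subcubes, up to an error budget of $\epsilon(n) \cdot 2^n$ on the symmetric difference with $f^{-1}(0)$. I would therefore design $f$ so that $f^{-1}(0)$ provably requires $\Omega(n^2)$ zero-pinning subcubes to cover within the allowed slack, by a counting or averaging argument over a carefully chosen distribution supported near $f^{-1}(0)$.

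For the construction, I would start from a function with a compact nonmonotone description that relies essentially on negated literals, and then minimally ``monotonize'' it so that only an $\epsilon(n)$ fraction of inputs are affected. The main obstacle, I expect, is making both requirements hold simultaneously: compactness as a nonmonotone CNF pushes $f$ toward structural simplicity, while an incompressible $f^{-1}(0)$ pushes $f$ toward combinatorial spread. Striking this balance is where the delicate combinatorics enters, and I would expect the separation to exploit specifically the ability of a negated literal to excise a single subcube cheaply---a capability that a monotone clause, which can only forbid a whole zero-pinning subcube, fundamentally lacks. Once $f$ is fixed with these tensions reconciled, the upper bound becomes a short calculation, and the lower bound reduces to the extremal subcube-covering statement outlined above.
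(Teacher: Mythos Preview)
This theorem is not proved in the present paper; it is quoted from \cite{BHST14} as prior work. There is therefore no ``paper's own proof'' to compare against. The paper's contribution is the \emph{stronger} exponential separation (Theorem~\ref{thm:monotoneapprox}), and that is proved by an entirely different route: rather than building an explicit function and analyzing monotone subcube covers directly, the paper invokes the Chen--Oliveira--Servedio monotone depth-3 lower bound and transfers it to the approximation setting via the duality theorem (Theorem~\ref{thm:correlation}). No direct combinatorial analysis of zero-pinning subcubes is carried out here.

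As for your outline on its own merits: the structural observation that a monotone CNF has $\psi^{-1}(0)$ equal to a union of $s$ down-closed subcubes is correct and is indeed the starting point in \cite{BHST14}. But your proposal remains a plan rather than a proof: you have not named a candidate $f$, and the tension you correctly identify---that $f$ must be both compactly described by a small nonmonotone CNF and yet have a $0$-set that resists coverage by few monotone subcubes---is exactly the crux. Until a concrete $f$ is on the table, there is no way to judge whether the counting/averaging argument you sketch would go through. If you want to reconstruct the quadratic separation, you will need to commit to a specific construction (in \cite{BHST14} this is a particular slice-based function) and carry out the extremal argument explicitly.
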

In this paper, we will significantly improve this bound and exhibit an \emph{exponential} separation.

\smallheadline{Universal Bounds}
Blais and Tan \cite{BT15} studied the universal bound on the size of CNF formulas approximating a function:
they
showed that any function $f$ can be $\epsilon$-approximated by
any CNF formula of size $O_\epsilon(2^{n} / \log n)$
for any constant $\epsilon$ (where a constant $O_\epsilon$ depends on $\epsilon)$.
They also gave a lower bound of $\Omega_\epsilon(2^{n} / n)$ for a CNF formula $\epsilon$-approximating
a random function, and thus leaving a gap.
Namely, the maximum complexity of CNF formulas approximating functions was not well understood.

\headline{Depth-3 Circuits}
As with approximation by depth-2, 
another simple computational model whose power is still ``mystery'' in complexity theory
is \emph{depth-3 circuits} (or formulas).
A depth-3 ($\AC^0$) circuit here is a directed acyclic graph
consisting of alternating 3 layers of AND and OR gates
and an input layer whose gates are labelled by literals (\ie an input $x_i$ or its negation $\neg x_i$).
A depth-3 \emph{formula} is a depth-3 circuit whose gates have fan-out $1$
(\ie a computational model in which the intermediate computation cannot be reused).
For simplicity, we assume that the top gate of depth-3 circuits and formulas is an OR gate (\ie $\OR \circ \AND \circ \OR$ circuits).

A counting argument \cite{JS42} shows that 
most functions $f$ requires $\Omega(2^n / \log n)$ literals for a formula to compute $f$.
And this is tight for depth-3 formulas, as a classical theorem by Lupanov \cite{Lup65}
shows that any function can be computed by a depth-3 formula with $O(2^n / \log n)$ literals.
Therefore, the complexity of depth-3 formulas computing \emph{random functions}
is fairly well understood.

In sharp contrast, it is wide open to prove a lower bound $2^{\omega(\sqrt{n})}$
of a depth-3 circuit (or formula)
computing some \emph{explicit}
function (\eg a function in $\P$, or even $\E^\NP$).
In 1980s,
there has been significant progress in the understanding
of the computational power of unbounded fan-in constant depth circuits
(\eg \cite{Ajt83, FSS84, Yao85, Has89, Raz87, Smo87}).
These results give
a depth-$d$ circuit lower bound of the form $2^{\Omega(n^{1/(d-1)})}$ for explicit functions,
such as the parity and majority functions.
After 30 years, this remains asymptotically the current best circuit lower bounds against constant depth circuits.

There are several reasons why depth-3 circuits are interesting.
One notable reason
was given by Valiant \cite{Val77},
who showed that
any linear-size log-depth circuit
can be transformed into a
depth-3 circuit of size $2^{O(n / \log \log n)}$.
Hence,
a strongly exponential depth-3 circuit lower bound
will yield a super-linear log-depth circuit lower bound simultaneously,
the former of which
appears to be easier to deal with.
Considerable efforts (\eg \cite{HJP95, PPZ99, PSZ00, PPSZ05, IPZ01, GW13, GT16}) have been thus made
to obtain strongly exponential depth-3 circuit lower bounds.

Another reason is that depth-3 circuits are closely related to CNF-SAT algorithms.
For example,
Paturi, Pudl{\'{a}}k and Zane \cite{PPSZ05}
proved that the minimum size of a depth-3 circuit computing the parity function is $\Theta(n^{1/4} 2^{\sqrt n})$,
and simultaneously developed a simple and fast CNF-SAT algorithm;
Paturi, Pudl{\'{a}}k, Saks and Zane \cite{PPSZ05}
improved the CNF-SAT algorithm
and simultaneously gave
a depth-3 circuit lower bound $2^{1.282 \ldots \sqrt n}$ for some explicit function,
which is the current best depth-3 circuit lower bound.
(A relationship between a fast CNF-SAT algorithm and a circuit lower bound
was made formal by Williams \cite{Wil13}.)

\headline{Duality Theorem}

The main theme of this paper is to weave these two lines of research together;
thereby we advance the two lines simultaneously.
To this end, we will exploit a general result below, which
establishes an equivalence between exact computation of $\OR \circ \mathcal C$
and one-sided-error approximation by $\mathcal C$ for any circuit class $\mathcal C$.
(Here, $\OR \circ \mathcal C$ denotes 
the class of circuits that consist of a top OR gate fed by disjoint $\mathcal C$ circuits.
In most of our applications, we take $\mathcal C = \ESet{ \text{CNF formulas} }$
and hence $\OR \circ \mathcal C$ is the class of depth-3 formulas.)

First, we need to introduce one new notion.
For any circuit class $\mathcal C$ and a Boolean function $f$ and a distribution $\mu$ on $f^{-1}(1)$,
define \emph{maximum (one-sided-error) correlation%
  \footnote{In a usual context of circuit lower bounds, a correlation between functions $f$ and $g$ is defined 
as $2\Pr[f(x) = g(x)] - 1$.
While the definition here is slightly different, we borrow the terminology.}
per size} of $f$ as 
\[
  \cor[\mu]^{\mathcal C}(f) := \max_{ \substack{ \phi \in \mathcal C \\ \phi \le f }} \Pr_{x \sim \mu} [\phi(x) = f(x)] / |\phi| \supplement{ = \max_{ \substack{ \phi \in \mathcal C \\ \phi \le f } }  \Pr_{x \sim \mu} [\phi(x) = 1] / |\phi| },
\]
and define $\cor^{\mathcal C}(f) := \min_\mu \cor[\mu]^{\mathcal C}(f)$.
Intuitively, (one-sided-error) correlation per size measures 
a trade-off between the size of a circuit $\phi \in \mathcal C$
and how well $f$ can be approximated by a circuit $\phi \in \mathcal C$
with one-sided error in the sense that $\lessthan \phi f$.
Our general result relates the complexity of exact computation of $\OR \circ \mathcal C$
and the complexity of one-sided-error approximation by $\mathcal C$:
\begin{theorem}
  [Duality (informal)]
  For any circuit class $\mathcal C$ and a Boolean function $f \colon \binset^n \to \binset$,
  let $\mathrm{L}^{\mathcal C}(f)$ denote the minimum size of an $\OR \circ \mathcal C$ circuit computing $f$.
  Then, 
  \[
    \mathrm{L}^{\mathcal C}(f)
    \approx
    (\cor^{\mathcal C}(f))^{-1},
  \]
  up to a factor of $n$.
\end{theorem}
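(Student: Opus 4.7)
The statement is essentially an LP-duality / set-cover identity between writing $f$ exactly as a disjunction of $\mathcal{C}$-circuits (each necessarily bounded by $f$) and fractionally covering $f^{-1}(1)$ by the $1$-sets of such circuits weighted by their sizes. The plan is to prove the two inequalities separately, one by a union-bound/averaging argument and the other by greedy covering.

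\emph{Lower bound $\mathrm{L}^{\mathcal{C}}(f) \ge \cor^{\mathcal{C}}(f)^{-1}$.} Take an optimal $\OR \circ \mathcal{C}$ formula $f = \phi_1 \vee \cdots \vee \phi_m$ with $\sum_j |\phi_j| = \mathrm{L}^{\mathcal{C}}(f)$; since the disjunction equals $f$, each $\phi_j \le f$. For any distribution $\mu$ on $f^{-1}(1)$, a union bound gives $\sum_j \Pr_\mu[\phi_j = 1] \ge \Pr_\mu[f = 1] = 1$. By averaging: if every ratio $\Pr_\mu[\phi_j = 1]/|\phi_j|$ were strictly below $1/\mathrm{L}^{\mathcal{C}}(f)$, then summing $\Pr_\mu[\phi_j = 1] < |\phi_j|/\mathrm{L}^{\mathcal{C}}(f)$ over $j$ would contradict $\sum_j \Pr_\mu[\phi_j = 1] \ge 1$; hence some $\phi_j$ certifies $\cor[\mu]^{\mathcal{C}}(f) \ge 1/\mathrm{L}^{\mathcal{C}}(f)$, and since $\mu$ was arbitrary, $\cor^{\mathcal{C}}(f) \ge 1/\mathrm{L}^{\mathcal{C}}(f)$.

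\emph{Upper bound $\mathrm{L}^{\mathcal{C}}(f) \le O(n) \cdot \cor^{\mathcal{C}}(f)^{-1}$.} Apply greedy covering. Let $c := \cor^{\mathcal{C}}(f)$ and $U_0 := f^{-1}(1)$. At step $i$, take $\mu_i$ uniform on $U_i$; the definition of $\cor^{\mathcal{C}}$ as a min over distributions produces $\phi_{i+1} \in \mathcal{C}$ with $\phi_{i+1} \le f$ and $\Pr_{\mu_i}[\phi_{i+1} = 1] \ge c\,|\phi_{i+1}|$. Set $U_{i+1} := U_i \setminus \phi_{i+1}^{-1}(1)$, so $|U_{i+1}| \le |U_i|(1 - c\,|\phi_{i+1}|)$. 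After $k$ rounds with total gate count $G_k := \sum_{j=1}^k |\phi_j|$, we get $|U_k| \le 2^n \exp(-c\,G_k)$; choosing $G_k = O(n/c)$ drives $|U_k|$ below $1$ and hence to $0$. Then $f = \bigvee_{j=1}^k \phi_j$ is an $\OR \circ \mathcal{C}$ circuit of size $O(n/c)$, where soundness uses $\phi_j \le f$ and completeness uses $U_k = \emptyset$.

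\emph{Main obstacle.} The key subtlety is bookkeeping inside the greedy step: the definition of $\cor^{\mathcal{C}}$ provides only a \emph{ratio} guarantee, not an absolute count of fresh $1$-inputs per circuit, so the potential $\ln|U_i|$ must be charged against the number of \emph{gates} introduced, $|\phi_{i+1}|$, rather than against the number of iterations. Once this accounting is in place, the $O(\log|f^{-1}(1)|) \le O(n)$ slack mirrors the integrality gap of weighted set cover and is intrinsic to the duality; it also matches the $n$ factor in the theorem's statement.
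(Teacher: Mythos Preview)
Your proof is correct and is essentially the paper's argument unwound: the paper casts the problem as weighted set cover, invokes the Lov\'asz--Chv\'atal $\ln|U|$ integrality gap (whose standard proof is exactly your greedy analysis), and identifies the LP dual optimum with $(\cor^{\mathcal{C}}(f))^{-1}$ via strong duality (your averaging step is weak duality written out by hand). One bookkeeping point worth making explicit in your greedy step: you do not get to ``choose'' $G_k$; instead note that any $\phi$ with ratio $\ge c$ satisfies $|\phi|\le 1/c$, so the overshoot in the final round is at most $1/c$, giving $G_{k^*}\le (n\ln 2+1)/c$ and recovering the paper's constant $1+n\ln 2$.
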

Here, the size of an $\OR \circ \mathcal C$ circuit $C_1 \vee \dots \vee C_m$
is defined as the sum $\sum_{i=1}^m |C_i|$.

In fact, its proof is quite simple:
Regard a computation by an $\OR \circ \mathcal C$ circuit as a set cover instance,
consider the linear programming relaxation, and then
take its \emph{dual}, whose optimal value corresponds to $(\cor^{\mathcal C}(f))^{-1}$.

\smallheadline{Related Work on Duality}
It should be noted that 
a similar result was known before
in the context of threshold circuits.
Indeed, the discriminator lemma by Hajnal, Maass, Pudl{\'{a}}k, Szegedy and Tur{\'{a}}n \cite{HMPST93}
states that
a circuit with top threshold gate that computes a function $f$
must have a subcircuit which has a high correlation with $f$.
A converse direction of the discriminator lemma was
proved by Goldmann, H{\aa}stad, and Razborov \cite{GHR92}
in the context of threshold circuits,
and by Freund \cite{Fre95} in the context of boosting.
Our duality can be regarded as a version of the discriminator lemma and its converse specialized 
to the case of circuits with top AND or OR gate (instead of threshold gate),
with a \emph{tighter quantitative trade-off} between correlation and size.
To the best of our knowledge, the notion of correlation per size
was neither defined nor recognized as important quantity before;
and the direction $\mathrm{L}^{\mathcal C}(f) \lesssim (\cor^{\mathcal C}(f))^{-1}$ was not known before.

In what follows, we explain how our duality theorem advances the lines of research on
depth-3 formulas and
approximation by depth-2 formulas.

\headline{Universal Bounds on Depth-2 Approximator}
We first apply the duality in order to obtain a tight universal bound on a depth-2 approximator.
As mentioned before, Blais and Tan \cite{BT15} left a gap between an upper bound $O(2^n / \log n)$ of 
a CNF formula approximating any function
and a lower bound $\Omega(2^n / n)$ for a random function.

In contrast, the depth-3 formula complexity of random functions is well understood \cite{JS42,Lup65}, at least
when the size of depth-3 formulas is defined as the number of \emph{literals} (and the bound is $\Theta(2^n / \log n)$).
For our purpose, we need to redefine the size of a depth-3 formula
as the number of AND gates at the middle layer, so that it is consistent with the fact that
the size of a CNF formula is measured as the number of clauses.
In fact, it turns out that Lupanov's construction \cite{Lup65} can be adapted to this size measure,
and we obtain an improved upper bound $O(2^n / n)$ of depth-3 formulas computing any function.
Our duality theorem transfers this result into approximation by depth-2, and we obtain the following tight bound:
\begin{theorem}
  [informal]
  \label{thm:universalapproximator_informal}
  \begin{itemize}
    \item
      For any function $f$,
      there exists some CNF formula $\phi$
      of size $O(\epsilon \cdot 2^n / n)$
      approximating $f$ with one-sided error and advantage $\epsilon$ for some $\epsilon$.
    \item
      There exists a function $f$
      such that
      any CNF formula approximating $f$ with one-sided error and advantage $\epsilon$
      must have size $\Omega(\epsilon \cdot 2^n / n)$.
  \end{itemize}
\end{theorem}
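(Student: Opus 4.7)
The plan is to apply the duality theorem with $\mathcal C$ the class of CNF formulas, so that $\OR \circ \mathcal C$ is the class of depth-3 formulas. The upper bound on the approximating CNF size then reduces to an upper bound on $\mathrm{L}^{\text{CNF}}(f)$ valid for every $f$, while the lower bound reduces to an upper bound on $\cor^{\text{CNF}}(f)$ for a specific hard $f$.

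For the upper bound, I would adapt Lupanov's classical construction \cite{Lup65} to show that every $f \colon \binset^n \to \binset$ is computed by an $\OR \circ \text{CNF}$ circuit whose total number of middle AND gates is $O(2^n/n)$. The classical bound $O(2^n/\log n)$ counts \emph{literals}; re-examining the same decomposition---partition the variables into a block of roughly $\log n$ coordinates, enumerate row functions on the remaining block, and express each row as a small CNF joined by a top OR---shows that the number of middle AND gates is smaller by a $\log n$ factor, yielding $O(2^n/n)$. Then the elementary direction of the duality theorem, $(\cor^{\text{CNF}}(f))^{-1} \le \mathrm{L}^{\text{CNF}}(f)$, which involves no factor-$n$ loss, gives $\cor^{\text{CNF}}(f) = \Omega(n/2^n)$. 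Unpacking the definition produces a CNF $\phi \le f$ with $\Pr_\mu[\phi = 1]/|\phi| = \Omega(n/2^n)$ for some distribution $\mu$; this is precisely the claimed trade-off between the advantage $\epsilon$ and the size $O(\epsilon \cdot 2^n / n)$.

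For the lower bound, I would bound $\cor^{\text{CNF}}(f)$ from above \emph{directly}, bypassing $\mathrm{L}^{\text{CNF}}$: the latter route would invoke the hard direction of the duality theorem and lose an extra factor of $n$, spoiling the constants. Take $f$ to be a uniformly random Boolean function and let $\mu$ be uniform on $f^{-1}(1)$. It suffices to show that, with positive probability over $f$, every CNF $\phi$ of size $k$ with $\phi \le f$ satisfies $|\phi^{-1}(1)| \le O(nk)$. Since the number of CNFs of size $k$ on $n$ variables is at most $2^{O(nk)}$ and each fixed $\phi$ satisfies $\phi \le f$ with probability $2^{-|\phi^{-1}(1)|}$ over a random $f$, a union bound at threshold $T = Cnk$ for a sufficiently large constant $C$ controls the bad event; summing over $k$ then yields $\cor[\mu]^{\text{CNF}}(f) = O(n/2^n)$.

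The main obstacle is tightness. The duality theorem loses a factor $n$ in one direction, so only the elementary direction is usable for the upper bound, and the lower bound must sidestep duality entirely. For the upper bound, one must carefully audit Lupanov's construction to confirm that the middle-AND-gate count really drops to $O(2^n/n)$ rather than $O(2^n/\log n)$; for the lower bound, the union bound must be stratified by both $k$ and $|\phi^{-1}(1)|$ so that the total probability of failure remains $o(1)$ across all CNF sizes simultaneously.
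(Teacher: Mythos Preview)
Your proposal is correct and follows the paper's proof closely: the upper bound comes from adapting Lupanov's construction to count middle AND gates (the paper uses the Hamming-code sphere-covering variant to get $\formulathree(f) \le 2^{n+3}/n$) and then applying the lossless direction of duality, while the lower bound takes a random $f$ and does a direct union bound over CNFs, bypassing the lossy direction of duality exactly as you anticipate. Two minor remarks: your phrase ``for some distribution $\mu$'' should read ``for every $\mu$'' (since $\cor(f) = \min_\mu \cor_\mu(f) \ge \Omega(n/2^n)$), which is what lets you specialize to the uniform distribution on $f^{-1}(1)$; and the paper parameterizes the lower-bound union bound by dyadic values of the advantage $\epsilon$ rather than by the CNF size $k$, but this is a cosmetic difference.
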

Here, the constants hidden in $O$ and $\Omega$ are universal (and, in particular, do not depend on $\epsilon$ nor $f$).

\headline{Approximating Monotone Functions}
Next, we apply the duality theorem in order to obtain
an exponential separation 
between
nonmonotone CNF formulas
and monotone CNF formulas
computing a monotone function,
which significantly improves \cite{BHST14}:
\begin{theorem}
  [informal]
  \label{thm:monotoneapprox_informal}
  There exists a monotone function $f$ such that 
  \begin{enumerate}
    \item
      there is some polynomial-size CNF formula that approximates $f$ with one-sided error, whereas
    \item
      any \emph{monotone} CNF formulas approximating $f$ with one-sided error 
      requires exponential size.
  \end{enumerate}
\end{theorem}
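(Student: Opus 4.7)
My plan is to apply the duality theorem with two different choices of circuit class $\mathcal{C}$ --- first $\mathcal{C}$ equal to all CNFs, then $\mathcal{C}$ equal to monotone CNFs --- in order to reduce the statement to an equivalent separation between the general depth-3 complexity and the monotone depth-3 complexity of a single monotone $f$. Concretely, the duality implies (up to factors of $n$) that part (1) is equivalent to: \emph{$f$ has a polynomial-size $\OR \circ \mathrm{CNF}$ circuit}; and part (2) is equivalent to: \emph{every $\OR \circ \mathcal{C}$ circuit computing $f$ with $\mathcal{C}$ the class of monotone CNFs has exponential size}. So the theorem reduces to exhibiting a monotone $f$ satisfying both of these depth-3 conditions.

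For the upper bound, I would construct $f$ explicitly as a monotone function whose minterm structure admits a compact non-monotone CNF description --- one exploiting negations to cover many minterms per clause, even though no individual monotone CNF can. A natural template is a monotone function based on a known monotone vs.\ non-monotone separation in bounded depth (for instance, a function built from bipartite matchings, or from a clique/coloring combinatorial structure), together with a direct construction of a small $\OR \circ \mathrm{CNF}$ circuit.

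For the lower bound --- the main technical obstacle --- I would choose a hard distribution $\mu$ on $f^{-1}(1)$ and prove that every monotone CNF $\phi$ with $\phi \le f$ of size at most $s$ satisfies $\Pr_\mu[\phi(x) = 1] \le s \cdot 2^{-\Omega(n^c)}$ for some $c > 0$. The key structural observation is that the $1$-set of a monotone CNF is an up-set generated by its minterms, each of which must lie above a minterm of $f$; a small monotone CNF therefore induces only a small, structured family of such up-sets, and one must bound their total $\mu$-measure via extremal combinatorics on the minterms of $f$, with $\mu$ chosen to exhibit exponential ``spread'' against any small monotone cover. Once both bounds are in hand, invoking the duality theorem directly converts them into the claimed polynomial-size non-monotone CNF approximator for $f$ and the exponential lower bound on monotone CNF approximators, completing the separation.
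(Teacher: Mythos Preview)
Your reduction via the duality theorem is exactly what the paper does: apply duality once with $\mathcal{C}=\{\text{CNFs}\}$ and once with $\mathcal{C}=\{\text{monotone CNFs}\}$, thereby reducing the statement to the existence of a monotone $f$ with polynomial nonmonotone depth-3 size but exponential monotone depth-3 size. So the framework is correct and matches the paper.

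The difference is in how the depth-3 separation itself is obtained. The paper does not construct $f$ or prove either bound from scratch; it simply invokes the Chen--Oliveira--Servedio theorem (an exponential strengthening of Ajtai--Gurevich), which already supplies a monotone $f$ with $\formulathree(f)=n^{O(1)}$ and monotone depth-3 size $2^{n^{\Omega(1)}}$. Once that is cited, both items follow in two lines: duality turns the monotone depth-3 lower bound into $\cor^+(f)\le 2^{-n^{\Omega(1)}}$ (giving the hard distribution $\mu$ and the exponential monotone-CNF lower bound), and turns the nonmonotone depth-3 upper bound into $\cor(f)\ge n^{-O(1)}$ (giving a poly-size nonmonotone CNF approximator under the \emph{same} $\mu$).

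Your proposal instead treats the monotone lower bound as ``the main technical obstacle'' to be attacked directly via a hand-chosen hard distribution and extremal combinatorics on minterms. That is not wrong in principle, but it amounts to re-deriving the Chen--Oliveira--Servedio result (or something of comparable strength), and your sketch does not commit to a specific $f$ or a concrete mechanism for the exponential bound. Since you already observe that the problem reduces to a known-style monotone vs.\ nonmonotone bounded-depth separation, the cleaner move---and the one the paper takes---is to cite that separation and let duality do all the remaining work.
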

That is, Quine's theorem fails badly for approximation by depth-2 formulas.

Our proof is in fact an immediate consequence of a recent result by
Chen, Oliveira and Servedio \cite{COS15}:
they showed that there is a monotone function $f$ such that
$f$ can be computed by some depth-3 formula of polynomial size,
whereas any \emph{monotone} depth-3 formula computing $f$ requires exponential size.
Our duality theorem transfers their result to Theorem \ref{thm:monotoneapprox_informal}.

\headline{Depth-3 Formula Lower Bound for Parity}
In the second half of this paper, we regard the duality theorem as 
a general approach for understanding the depth-3 formula complexity.
That is, we aim at proving an upper bound on the correlation per size $\cor(f)$ of a CNF formula,
and use it to obtain a depth-3 formula lower bound.
We prove that 
any depth-3 formula computing 
the parity function
requires $\Omega(2^{2 \sqrt n})$ gates,%
\footnote{Independently of our work, Rahul Santhanam and Srikanth Srinivasan (personal communication) obtained the same lower bound.}
which is tight up to a factor of $\sqrt n$.

Our proof follows from an almost tight one-sided-error correlation bound,
and our correlation bound improves another result of Blais and Tan \cite{BT15}:
They studied
the minimum size of CNF formulas that compute $\PAR_n$
all but an $\epsilon$ fraction of inputs.
They showed an upper bound of $2^{(1 - 2^{-\ceil{\log 1/2\epsilon}} n)} \approx 2^{(1 - 2\epsilon) n}$
(moreover, with one-sided error in our sense)
and an lower bound of $(\frac{1}{2} - \epsilon) 2^{\frac{1 - 2\epsilon}{1 + 2\epsilon} n}$.
In this paper, we focus on approximation with one-sided error and high error regimes
(\eg $\epsilon = \frac{1}{2} - 2^{-\sqrt n}$).
By using the satisfiability coding lemma \cite{PPZ99} and width reduction techniques \cite{Sch05, CIP06},
we obtain an lower bound of $2^{n / (k + 1) - 3}$ for $k := \log \frac{2}{1 - 2\epsilon}$
and an upper bound of $k' 2^{\ceil{n / k'} - 1}$ for $k' := \floor{k}$,
which significantly improves the results of \cite{BT15} when 
the error fraction $\epsilon$ is close to $\frac{1}{2}$.

In terms of correlation bounds,
our result shows that
any CNF formula $\phi$ that computes $\PAR_n$ and does not err on inputs in $\PAR_n^{-1}(0)$
has the correlation with $\PAR_n$ at most $2^{- n / (\log |\phi| + 3) + 2}$.
This improves the previous bounds 
$2^{- n / O(\log s)^{d-1}}$
on the correlation between the parity function and depth-$d$ circuits of size $s$
in the case of $d = 2$
(Beame, Impagliazzo and Srinivasan \cite{BIS12}, H{\aa}stad \cite{Has14} and Impagliazzo, Matthews, and Paturi \cite{IMP12}).

Given the almost optimal one-sided-error correlation bound,
our duality theorem immediately implies a
depth-3 formula lower bound as follows:
The one-sided-error correlation per size of a CNF formula $\phi$ with $\PAR_n$
is at most $2^{- n / (\log |\phi| + 3) + 2} / |\phi| = 2^{5 - n / (\log |\phi| + 3) - (\log |\phi| + 3)}$,
which is bounded above by $2^{5 - 2\sqrt{n}}$
from the inequality of the arithmetic and geometric means.
Hence, any depth-3 formula computing the parity function requires $2^{2\sqrt{n} - 5}$ gates.

\headline{Depth-3 Circuits vs.\ Formulas}
Our formula lower bound on the parity function is of interest
from yet another literature:
Questions whether circuits are strictly more powerful than formulas 
are one of central questions in complexity theory.
It is straightforward to see that any unbounded fan-in depth-$d$ circuit of size $s$
has an equivalent representation as a depth-$d$ formula of $s^{d-1}$,
by simply replicating overlapping subcircuits.
(Thus, depth-3 formula size and depth-3 circuit size are at most quadratically different.)
The converse direction, namely, whether this \naive simulation is optimal,
is closely related to the $\NC^1$ vs.\ $\AC^1$ problem.
Indeed, if one could show
that there exists a language that can be computed by log-depth
unbounded fan-in polynomial-size circuits (\ie $\AC^1$ circuits)
but requires a formula%
\footnote{Note that, by Spira's theorem \cite{Spi71}, $\NC^1$ can be characterized as languages 
computable by polynomial-size formulas. } of size $n^{\Omega(\log n)}$, then $\NC^1 \neq \AC^1$.
Although
we thus cannot hope that the circuit vs.\ formula question can be solved
for $d = O(\log n)$ anytime soon
(due to the natural proof barrier \cite{RR97}),
the question was solved affirmatively in some restricted cases.

The monotone circuit vs.\ monotone formula question
was solved by Karchmer and Wigderson \cite{KW90}.
They showed that monotone formulas computing st-connectivity require $n^{\Omega(\log n)}$ gates,
whereas there is a polynomial-size monotone circuit computing st-connectivity.
Their communication complexity theoretic approach (Karchmer-Wigderson games)
has been quite successful in monotone settings.
However, little was known in nonmonotone settings
until 
recent results by Rossman \cite{Ros14, Ros15}
(see \cite{Ros14} for further background).
He showed that the simulation of depth-$d$ circuits of size $s$ by 
depth-$d$ formulas requires $s^{\Omega(d)} \supplement{> s}$ gates
for a sufficiently large $d$ (say, $d \ge 108$) up to $d = o(\frac{\log n}{\log \log n})$.
More specifically, he \cite{Ros15} showed that 
any depth-$d$ formula computing $\PAR_n$ requires $2^{\Omega(d (n^{1 / (d-1)} - 1))}$ gates,
whereas it is known that
$\PAR_n$ can be computed by a depth-$d$ circuit of size $n 2^{n^{1 / (d-1)}}$.

Motivated by Rossman's results,
we may ask whether depth-$d$ circuits are more powerful than depth-$d$ formulas for a small constant $d$.
And we may ask whether depth-$d$ circuits of size $s$
\emph{cannot} be simulated by depth-$d$ formulas of size even slightly better than the \naive simulation, say, $s^{d - 1.01}$.
We answer these questions affirmatively for $d = 3$:
Our formula lower bound immediately implies that
simulating depth-3 circuits by depth-3 formulas
requires a \emph{quadratic} slowdown,
thereby separating depth-3 circuit and formula size almost optimally.
\begin{corollary*}
  Let $s ( = \Theta(n^{1/4} 2^{\sqrt n}) )$ be the minimum
  depth-3 circuit size for computing $\PAR_n$.
  Any depth-3 formula computing $\PAR_n$ requires $\Omega(s^2 / \log s)$ gates.
\end{corollary*}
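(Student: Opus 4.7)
The plan is to combine two ingredients already in hand: the tight depth-3 circuit bound $s = \Theta(n^{1/4} 2^{\sqrt n})$ for $\PAR_n$ due to Paturi, Pudl\'ak and Zane, and the depth-3 formula lower bound $\Omega(2^{2\sqrt n})$ for $\PAR_n$ established just above via the duality theorem applied to the near-tight one-sided CNF correlation bound against parity. Once both are available, the corollary is a one-line arithmetic identity: one simply has to verify that $s^2 / \log s = \Theta(2^{2\sqrt n})$ for this choice of $s$.

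Concretely, I would take logarithms of $s = \Theta(n^{1/4} 2^{\sqrt n})$ to obtain $\log s = \sqrt n + \tfrac14 \log n + \Theta(1) = \Theta(\sqrt n)$, and compute $s^2 = \Theta(n^{1/2} \cdot 2^{2\sqrt n})$. Dividing yields $s^2 / \log s = \Theta(2^{2\sqrt n})$. Substituting this into the lower bound of $\Omega(2^{2\sqrt n})$ from the preceding theorem gives $\Omega(s^2 / \log s)$ for any depth-3 formula computing $\PAR_n$, which is exactly the claim. For context, the naive depth-$d$ circuit-to-formula simulation (of size $s^{d-1} = s^2$ at $d=3$) mentioned earlier in the excerpt supplies the matching upper bound, up to the $\log s = \Theta(\sqrt n)$ factor, which is why we can describe the separation as quadratic and almost tight.

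There is no genuine obstacle here: all the substantive work has been done in the preceding formula lower bound theorem, whose proof in turn routes through the duality theorem and the CNF correlation bound $2^{-n/(\log |\phi| + 3) + 2}$ against $\PAR_n$, optimized by the AM-GM step already sketched in the introduction. The corollary itself contributes only the translation of the absolute bound $2^{2\sqrt n}$ into the relative form $s^2 / \log s$, which is purely computational.
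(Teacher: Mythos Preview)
Your proposal is correct and matches the paper's intended approach exactly: the corollary is stated in the introduction as an immediate consequence of the depth-3 formula lower bound $\Omega(2^{2\sqrt n})$ (Theorem~\ref{thm:paritymain}), and the only remaining step is precisely the arithmetic identity $s^2/\log s = \Theta(2^{2\sqrt n})$ that you carry out.
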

\noindent
We note that this does not seem to follow from Rossman's techniques
as he used the switching lemma \cite{Has89}, which may lose some constant
factor in the exponent.

\headline{Computing Majority by Depth-3 Circuits}
We also make a step towards better understanding of 
depth-3 formulas computing the majority function.
We characterize the monotone complexity of the majority function
in terms of a natural extremal problem on hypergraphs:
For a hypergraph $\mathcal F$,
let
$|\mathcal F|$ denote the number of edges in $\mathcal F$,
$\tau(\mathcal F)$ denote the minimum size of hitting sets of $\mathcal F$
(\ie sets that intersect with every edge in $\mathcal F$),
and $t(\mathcal F)$ denote the number of minimum hitting sets of $\mathcal F$.
Let $T(n, \tau) := \max_{\mathcal F \colon \tau(\mathcal F) = \tau} t(\mathcal F) / |\mathcal F|$,
where the maximum is taken over all hypergraphs of $n$ vertices.
In Section \ref{sec:hard}, we show that $2^n / T(n, n/2)$ is asymptotically
equal to the minimum size of monotone depth-3 formulas computing the majority function.
The main idea of the proof is that one of the hardest distributions against monotone CNF formulas
can be exactly determined.

It is known that any depth-3 formula (or circuit) computing the majority function
requires $2^{\Omega(\sqrt n)}$ gates (\eg \cite{Has89, HJP95}),
whereas the current best upper bound on the size of a depth-3 formula computing 
the majority function is $2^{O(\sqrt{n \log n})}$ (which can be constructed by partitioning variables into blocks of equal size; see \cite{KPPY84}),
and the depth-3 formula is monotone.
Thus, it follows that $2^{n - O(\sqrt{n \log n})} \le T(n, n/2) \le 2^{n - O(\sqrt{n})}$.
To the best of our knowledge, there has been essentially no improvement
on the minimum depth-3 circuit computing the majority function
over 20 years since H{\aa}stad \etal \cite{HJP95} posed the question explicitly,
even when restricted to the case of \emph{monotone} circuits.
We propose an open problem of determining the asymptotic behavior of $T(n, n/2)$
as the first step towards determining the minimum depth-3 nonmonotone
circuits computing the majority function.

We note that, in the case of graphs (instead of hypergraphs),
$T(n, \tau)$ and its extremal structure are well understood in the context of extensions of Tur\'an's theorem \cite{Tur41}.
We show that 
a known extension of Tur\'an's theorem
implies the optimal circuit lower bound for
computing the majority function by monotone depth-3 circuits with bottom fan-in 2.

\headline{Worst-case to Average-case Connection}
Our duality theorem can be viewed
as a \emph{generic} equivalence
between
worst-case complexity and one-sided-error average-case complexity:
It applies to any computational model 
that is capable of simulating an unbounded fan-in OR or AND gate.

As a concrete application, we apply it to the case of $\AC^0[p]$.
Here, $\AC^0[p]$ denotes the class of constant-depth circuits consisting of
NOT gates and unbounded fan-in AND, OR, and $\MOD_p$ gates.
Razborov \cite{Raz87} and Smolensky \cite{Smo87} established
celebrated exponential lower bounds of $\AC^0[p]$ circuits computing $\PAR_n$
for odd prime $p$.
Their techniques also give an average-case lower bound such that
any $\AC^0[p]$ circuit of size at most $2^{-n^{o(1)}}$ 
has correlation with $\PAR_n$ at most $n^{-1/2 + o(1)}$
(Smolensky \cite{Smo93}; see also \cite{Fil10}).
This correlation bound remains the strongest known,
and it is a long-standing open problem whether the correlation bound can be improved.
See \cite{FSUV13} for more detailed backgrounds.

While we were not able to obtain (two-sided-error) correlation bounds,
we do obtain a \emph{one-sided-error} correlation bound.
Our duality theorem
transfers 
the worst-case lower bound of Razborov and Smolensky \cite{Raz87,Smo87}
against $\OR \circ \AC^0[p]$ circuits
to the following one-sided-error correlation bound:
\begin{theorem}
  \label{thm:worst_to_average}
  Let $C$ be any $\AC^0[p]$ circuit of depth $d$ for some odd prime $p$
  such that $C^{-1}(1) \subset \PAR_n^{-1}(1)$.
  Then we have $\Pr_{x \sim \binset^n} [ C(x) = \PAR_n(x) ] \le \frac{1}{2} + |C| \cdot 2^{-n^{\Omega(1/d)}}$,
  where $|C|$ denotes the number of gates of $C$.
\end{theorem}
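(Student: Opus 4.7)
The plan is to combine the duality theorem with the classical Razborov--Smolensky worst-case lower bound, and then exploit the rich symmetry of $\PAR_n$ to upgrade the abstract ``hard distribution'' produced by duality to the uniform distribution on $\PAR_n^{-1}(1)$. Take $\mathcal C$ to be the class of $\AC^0[p]$ circuits of depth $d$. Then $\OR\circ\mathcal C$ consists of $\AC^0[p]$ circuits of depth at most $d+1$, which is still a constant, so by Razborov and Smolensky any such circuit computing $\PAR_n$ has size at least $2^{n^{\Omega(1/d)}}$; that is, $\mathrm L^{\mathcal C}(\PAR_n) \ge 2^{n^{\Omega(1/d)}}$. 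Applying the duality theorem in the direction $\mathrm L^{\mathcal C}(f) \lesssim n\cdot(\cor^{\mathcal C}(f))^{-1}$ and absorbing the factor of $n$ into the exponent yields $\cor^{\mathcal C}(\PAR_n) \le 2^{-n^{\Omega(1/d)}}$. By the definition of $\cor^{\mathcal C}$, there is then a distribution $\mu^\ast$ on $\PAR_n^{-1}(1)$ such that $\Pr_{x\sim\mu^\ast}[C(x)=1] \le |C|\cdot 2^{-n^{\Omega(1/d)}}$ for every $C\in\mathcal C$ with $C^{-1}(1)\subset\PAR_n^{-1}(1)$.

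The second step is to replace $\mu^\ast$ by the uniform distribution $\mu_0$ on $\PAR_n^{-1}(1)$. Consider the group $G = \{\,x\mapsto x+v : v\in\binset^n,\ \PAR_n(v)=0\,\}$, which preserves $\PAR_n$ and acts transitively on $\PAR_n^{-1}(1)$ (given odd-parity $x,y$, the shift $v=x+y$ has even parity and maps $y$ to $x$). Precomposing any $C\in\mathcal C$ with an element of $G$ merely flips some input literals, so $\mathcal C$ is closed under this action and circuit sizes are preserved; consequently the functional $\mu\mapsto\cor[\mu]^{\mathcal C}(\PAR_n)$ is $G$-invariant. Being a pointwise maximum of linear functionals in $\mu$, it is also convex, so averaging $\mu^\ast$ over $G$ produces a $G$-invariant distribution whose correlation-per-size is no larger than that of $\mu^\ast$. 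By transitivity this averaged distribution is exactly $\mu_0$, which yields $\Pr_{x\sim\mu_0}[C(x)=1] \le |C|\cdot 2^{-n^{\Omega(1/d)}}$ for all admissible $C$.

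Finally, the one-sided-error hypothesis forces $C(x)=\PAR_n(x)=0$ whenever $\PAR_n(x)=0$, so
\[
  \Pr_{x\sim\binset^n}[C(x)=\PAR_n(x)] \;=\; \tfrac12 + \Pr_{x\sim\binset^n}[C(x)=1] \;=\; \tfrac12 + \tfrac12\,\Pr_{x\sim\mu_0}[C(x)=1] \;\le\; \tfrac12 + |C|\cdot 2^{-n^{\Omega(1/d)}},
\]
as claimed. I expect the symmetrization to be the main obstacle in general: one needs a subgroup of the automorphisms of the target function acting transitively on its $1$-preimage, under which both the circuit class and the correlation-per-size functional are invariant, so that convexity lets us swap $\mu^\ast$ for the uniform distribution. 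Here this is direct because $\AC^0[p]$ is closed under flipping input literals and $\PAR_n$ is invariant under XOR by even-weight shifts; but without such symmetry the duality theorem alone only controls some unspecified hard distribution, not the uniform one.
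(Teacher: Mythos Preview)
Your proof is correct and matches the paper's approach essentially step for step: apply the duality theorem to $\OR\circ\AC^0[p]$, invoke Razborov--Smolensky on the resulting depth-$(d+1)$ circuits, and then symmetrize the hard distribution over the group of even-weight XOR shifts to replace it by the uniform distribution on $\PAR_n^{-1}(1)$. The paper packages the symmetrization step as a standalone lemma (Lemma~\ref{lemma:symmetric}, specialized to parity in Corollary~\ref{cor:hard_dist_of_parity}) proved by the same averaging computation you phrase in terms of convexity and $G$-invariance; the content is identical.
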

\noindent
In particular, any $\AC^0[p]$ circuit of depth $d$ and size at most $2^{n^{O(1/d)}}$
that does not err on $\PAR_n^{-1}(0)$
has correlation with $\PAR_n$ at most $2^{-n^{\Omega(1/d)}}$.

\headline{Preliminaries and Notations}
Throughout this paper,
we assume that the top gate of a depth-3 circuit or formula is an OR gate.
Note that this assumption does not lose any generality for computing the parity or majority function, due to De Morgan's laws.
A \emph{depth-3 formula} is thus
an OR of CNF formulas,
and its size
is the sum of the size of CNF formulas in the depth-3 formula.
A \emph{CNF formula} is an AND of clauses,
and its size is the number of clauses in the CNF formula.
A clause is an OR of literals.
(Alternatively,
the size of depth-3 formulas is
defined to be 
the number of OR gates at the bottom layer when represented as a rooted tree.
Note that our measure is the same with the number of gates up to a factor of $2$
and is the same with the number of literals up to a factor of $2n$.)
For a function $f \colon \binset^n \to \binset$,
we write $\formulathree(f)$ for the minimum size of depth-3 formulas
computing $f$.
We often identify a formula with the function computed by the formula.

$\PAR_n \colon \binset^n \to \binset$ is the function
such that $\PAR_n(x) = 1$ if and only if the number of ones in $x \in \binset^n$ is odd.
Similarly, $\MAJ_n \colon \binset^n \to \binset$ is the function such that $\MAJ_n(x) = 1$ if and only if
the number of ones in $x$ is at least $n/2$.

A \emph{distribution} $\mu$ on a finite set $X$ is a function $\mu \colon X \to [0, 1]$
such that $\sum_{x \in X} \mu(x) = 1$.
We write $x \sim \mu$ to indicate that $x$ is a random variable sampled from a distribution $\mu$.
Abusing notation,
we identify a finite set $X$ with the uniform distribution on $X$.
For example, $x \sim X$ means that $x$ is a uniform sample from $X$.

\headline{Organization}
The rest of this paper is organized as follows:
In Section \ref{sec:duality}, we prove our duality theorem
which links depth-3 formulas with one-sided approximation by depth-2 formulas.
In Section \ref{sec:hard}, we compute one of hardest distributions in some cases,
and prove Theorem \ref{thm:worst_to_average}.
In Sections \ref{sec:approxmonotone} and \ref{sec:universalbound},
Theorems \ref{thm:monotoneapprox_informal} and \ref{thm:universalapproximator_informal} are proved, respectively.
In Section \ref{sec:sat}, we study the one-sided-error correlation bound of the parity function.
We conclude with some open problems in Section~\ref{sec:conclusions}.

\section{Proof of Duality} \label{sec:duality} 
In this section,
we prove that the complexity of depth-3 formulas
is closely related to correlation bounds of 
CNF formulas with one-sided error.
As mentioned, our results hold for general settings
(\eg worst-case complexity of depth-$(d+1)$ formulas 
is almost equivalent to one-sided-error average-case complexity of depth-$d$ formulas);
however, for simplicity, we focus on the case of CNF formulas.
We first define correlation bounds of CNF formulas.
\begin{definition}
  Let $f \colon \binset^n \to \binset$ be a function.
  \begin{itemize}
    \item
      Let $\CNF{f}$ be the set of all CNF formulas $\phi$
      such that $\phi^{-1}(1) \subset f^{-1}(1)$
      (\ie $\phi$ does not err on inputs $x$ such that $f(x) = 0$).
    \item
      Let $\mu$ be a distribution on $f^{-1}(1)$.
      The \emph{maximum one-sided correlation per size}%
      \footnote{In order to make this definition well-defined,
        we regard $0 / 0$ as $0$.}
      $ \cor[\mu](f) $
      with $f$
      with respect to $\mu$
      is
      \[
        \cor[\mu](f) := \max_{\phi \in \CNF{f}} \Pr_{x \sample \mu} \left[ \phi(x) = 1 \right] \,/\, |\phi|.
      \]
    \item
      The \emph{maximum one-sided correlation per size} $\cor(f)$
      with $f$
      is
      $
        \cor(f) := \min_\mu \cor[\mu](f),$
      where the minimum is taken over all distribution $\mu$ on $f^{-1}(1)$.
    \item
      We denote by $\cor^+(f)$ the maximum one-sided correlation per size
      with $f$ for \emph{monotone} CNF formulas (\ie formulas without negated literals).
  \end{itemize}
\end{definition}

The following duality is the main principle that we will exploit.
\begin{theorem}
  \label{thm:correlation}
  Let $f \colon \binset^n \to \binset$ be a function
  such that $\formulathree(f) > 0$.
  The following holds:
  \[
    \left( \cor(f) \right)^{-1} \le
    \formulathree(f)
    \le
    (1 + n \cdot \ln 2) \cdot \left( \cor(f) \right)^{-1}.
  \]
\end{theorem}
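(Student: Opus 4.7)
The plan is to identify a minimum-size depth-3 formula computing $f$ with a weighted set cover on the universe $f^{-1}(1)$. A depth-3 formula $\phi_1 \vee \cdots \vee \phi_m$ computes $f$ exactly when each sub-formula lies in $\CNF{f}$ and the ``$1$-sets'' $\phi_i^{-1}(1)$ together cover $f^{-1}(1)$, with depth-3 size equal to the total cost $\sum_i |\phi_i|$. Under this identification, $(\cor(f))^{-1}$ is, up to a harmless normalization, the LP-dual optimum of this set-cover instance (the dual variable $\mu$ being a non-negative measure on $f^{-1}(1)$ subject to $\sum_{y: \phi(y)=1} \mu(y) \le |\phi|$ for every $\phi \in \CNF{f}$). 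So Theorem~\ref{thm:correlation} is essentially the standard $(1 + \ln|U|)$-integrality gap of weighted set cover, applied with $|U| \le 2^n$.

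For the lower bound, rather than invoke LP-duality as a black box I would argue directly. Fix any distribution $\mu$ on $f^{-1}(1)$ and any depth-3 formula $\phi_1 \vee \cdots \vee \phi_m$ computing $f$. Each sub-formula satisfies $\phi_i \le f$, so $\phi_i \in \CNF{f}$, and the definition of $\cor[\mu](f)$ gives $\Pr_{x \sim \mu}[\phi_i(x) = 1] \le \cor[\mu](f) \cdot |\phi_i|$. Combining with the union bound,
\[
1 \;=\; \Pr_{x \sim \mu}[f(x) = 1] \;\le\; \sum_{i=1}^{m} \Pr_{x \sim \mu}[\phi_i(x) = 1] \;\le\; \cor[\mu](f) \cdot \sum_{i=1}^m |\phi_i|,
\]
and choosing the $\mu$ that minimizes $\cor[\mu](f)$ yields $\formulathree(f) \ge (\cor(f))^{-1}$.

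For the upper bound, I would run greedy weighted set cover over $\CNF{f}$: starting from $T_0 := f^{-1}(1)$, at step $i$ choose $\phi_i \in \CNF{f}$ maximizing $|\phi_i^{-1}(1) \cap T_{i-1}|/|\phi_i|$, and set $T_i := T_{i-1} \setminus \phi_i^{-1}(1)$. The correlation bound feeds in as follows: applying $\cor[\mu_i](f) \ge \cor(f)$ with $\mu_i$ uniform on $T_{i-1}$ produces some $\phi \in \CNF{f}$ achieving $|\phi^{-1}(1) \cap T_{i-1}|/|\phi| \ge \cor(f) \cdot |T_{i-1}|$, so the greedy choice does at least as well. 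A standard charging argument then closes the loop: assign each element $y$ (covered at step $i$) the price $|\phi_i|/|\phi_i^{-1}(1) \cap T_{i-1}|$, so that the total cost equals the sum of prices; ordering elements in reverse order of coverage bounds the $j$-th-from-last element's price by $1/(\cor(f) \cdot j)$, and summing gives $\formulathree(f) \le H(|f^{-1}(1)|)/\cor(f) \le (1 + n \ln 2)/\cor(f)$.

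The main obstacle is essentially cosmetic but worth checking: one needs $\cor(f) > 0$ for the upper bound to be meaningful, which follows from the fact that every $y \in f^{-1}(1)$ is the unique satisfying assignment of an $n$-clause CNF in $\CNF{f}$ (its minterm). Finiteness of the collection $\{\phi^{-1}(1) : \phi \in \CNF{f}\}$ (as a subfamily of $2^{f^{-1}(1)}$) also ensures that the suprema defining $\cor[\mu](f)$ are attained and the greedy step is well-defined.
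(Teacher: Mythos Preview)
Your proposal is correct and follows essentially the same approach as the paper: both identify $\formulathree(f)$ with a weighted set-cover instance over $\CNF{f}$ and derive the two inequalities from (weak) LP duality and the $(1+\ln|U|)$ integrality gap. The only difference is presentational---the paper invokes strong LP duality and cites the Lov\'asz--Chv\'atal bound as a black box, whereas you inline both arguments (the union bound for the lower bound, and the greedy charging analysis for the upper bound).
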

\begin{proof}
  The idea is to regard the depth-3 formulas as
  a set cover instance, apply a
  linear programming relaxation
  and take the dual of the LP relaxation.

  Specifically, if $f$ is computed by a depth-3 formula,
  $f$ can be written as a disjunction of CNF formulas $\phi_1, \dots, \phi_m$
  (\ie $f = \bigvee_{i=1}^m \phi_i$, or equivalently, $f^{-1}(1) = \bigcup_{i=1}^m \phi_i^{-1}(1)$)
  such that $\phi_i^{-1}(1) \subset f^{-1}(1)$.
  Moreover, the size of the depth-3 formula is $\sum_{i=1}^m |\phi_i|$.
  Therefore, the problem of finding the minimum depth-3 formula computing $f$
  is equivalent to the following set cover instance:
  We want to cover
  a universe $U := f^{-1}(1)$
  by using a collection of
  sets $\phi^{-1}(1)$ such that
  $\phi$ is a CNF formula and $\phi^{-1}(1) \subset f^{-1}(1)$ (\ie $\phi \in \CNF{f}$),
  where the cost of $\phi^{-1}(1)$ is defined to be $|\phi|$.
  The minimum cost of this set cover instance is 
  exactly equal to
  the minimum size of depth-3 formulas computing $f$.

  Now we consider a linear programming relaxation of the set cover instance
  on variables $x_\phi$ for each $\phi \in \CNF{f}$:
  \begin{align*}
    \text{minimize} \quad &
    \sum_{\phi \in \CNF{f}} |\phi| \cdot x_\phi
    \\
    \text{subject to} \quad &
    \sum_{\phi \in \CNF{f}  \,:\ \phi(e) = 1} x_\phi \ge 1
    & \text{for all $e \in f^{-1}(1)$,}
    \\
    &
    x_\phi \ge 0
    & \text{for all $\phi \in \CNF{f}$.}
  \end{align*}
  Let $s^*$ denote the optimal value of this
  linear programming.
  Since it is a linear programming relaxation of the set cover problem,
  it holds that $s^* \le \formulathree(f)$.
  Moreover, it is well known that the integrality gap of 
  the set cover problem is at most
  $1 + \ln |U| \le 1 + n \cdot \ln 2$
  (Lov{\'a}sz~\cite{Lov75} and Chvatal~\cite{Chv79}; see also Vazirani~\cite{Vaz01}).
  Thus, we have $s^* \le \formulathree(f) \le (1 + n \cdot \ln 2) \cdot s^*$.

  It remains to claim that $s^* = (\cor(f))^{-1}$.
  By the strong duality of linear programming,
  $s^*$ is equal to the optimal value of 
  the following dual problem
  on variables $y_e$ for each $e \in f^{-1}(1)$:
  \begin{align*}
    \text{maximize} \quad &
    \sum_{e \in f^{-1}(1)} y_e
    \\
    \text{subject to} \quad &
    \sum_{e \in \phi^{-1}(1)} y_e \le |\phi|
    & \text{for all $\phi \in \CNF{f}$,}
    \\ &
    y_e \ge 0
    & \text{for all $e \in f^{-1}(1)$.}
  \end{align*}
  Let $v := \sum_{e \in f^{-1}(1)} y_e$.
  Define%
  \footnote{Since $0 < \formulathree(f) \le (1 + n \cdot \ln 2) \cdot s^*$, we have $s^* > 0$.
  Thus, we may assume, without loss of generality, that $v > 0$.}
  $\mu$ to be the
  distribution on $f^{-1}(1)$ such that $\mu(e) := y_e / v$ for each $e \in f^{-1}(1)$.
  Since $\sum_{e \in \phi^{-1}(1)} y_e = v \cdot \Pr_{e \sample \mu} \left[ \phi(e) = 1 \right]$,
  we can rewrite the dual problem as 
  the following optimization problem over all distributions $\mu$ on $f^{-1}(1)$:
  \begin{align*}
    \text{maximize} \quad &
    v
    \\
    \text{subject to} \quad &
    v \le 
    \left( \Pr_{e \sample \mu} \left[ \phi(e) = 1 \right] / |\phi| \right)^{-1}
    & \text{for all $\phi \in \CNF{f}$,}
  \end{align*}
  whose optimal value is clearly equal to $(\cor(f))^{-1}$.
  Hence, the optimal value $s^*$  of the dual problem is
  equal to $(\cor(f))^{-1}$.
\end{proof}

\section{Computing Hard Distributions} \label{sec:hard}

In this section, we give some examples
for which one of hard distributions can be determined.
If a function $f$ is ``symmetric'' in a certain sense,
one can reduce the number of variables, as in Yao's minimax principle \cite{Yao77}.
The following lemma gives such a general method.
\begin{lemma}
  \label{lemma:symmetric}
  Let $f \colon \binset^n \to \binset$.
  Let $\Pi$ be a subgroup 
  of the symmetric group on $\binset^n$
  such that,
  for each $\pi \in \Pi$ and each $\phi \in \CNF{f}$,
  there exists a formula $\phi_{\pi} \in \CNF{f}$
  such that $\phi = \phi_{\pi} \circ \pi$,
  $f = f \circ \pi$,
  and $|\phi_{\pi}| = |\phi|$.
  Then, 
  there exists a distribution $\mu$ on $f^{-1}(1)$ such that
  $\mu \circ \pi = \mu$ for any $\pi \in \Pi$ and
  $\cor(f) = \cor[\mu](f)$.
\end{lemma}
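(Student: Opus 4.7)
The plan is a standard Yao-style minimax averaging argument: I would like to symmetrize an optimal hard distribution over the group $\Pi$. First, take $\mu^{\ast}$ to be a distribution on $f^{-1}(1)$ attaining $\cor[\mu^{\ast}](f) = \cor(f)$. Such a $\mu^{\ast}$ exists because, after replacing each $\phi \in \CNF{f}$ by a minimum-size representative of $\phi^{-1}(1)$, the function $\mu \mapsto \cor[\mu](f)$ is the maximum of finitely many linear functionals on the compact simplex of distributions on $f^{-1}(1)$, hence continuous. The hypotheses state that $\Pi$ preserves $f^{-1}(1)$ setwise (because $f = f \circ \pi$) and acts on $\CNF{f}$ by $\phi \mapsto \phi_\pi$ in a size-preserving way, so averaging $\mu^{\ast}$ over $\Pi$ should not increase the correlation attained by any CNF formula.

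Concretely, I would set
$$\mu := \frac{1}{|\Pi|} \sum_{\pi \in \Pi} \mu^{\ast} \circ \pi^{-1},$$
i.e.\ sample $y \sim \mu^{\ast}$ and $\pi \in \Pi$ uniformly and output $\pi(y)$. Since each $\pi$ permutes $f^{-1}(1)$, $\mu$ is supported on $f^{-1}(1)$; and $\Pi$-invariance $\mu \circ \sigma = \mu$ for any $\sigma \in \Pi$ is the usual one-line check via the bijection $\pi \mapsto \sigma^{-1}\pi$ on $\Pi$. The inequality $\cor[\mu](f) \ge \cor(f)$ is free from the definition of $\cor(f)$, so it remains to show $\cor[\mu](f) \le \cor(f)$.

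For the latter, fix any $\phi \in \CNF{f}$ and unfold
$$\Pr_{x \sim \mu}\!\bigl[\phi(x) = 1\bigr] \;=\; \frac{1}{|\Pi|} \sum_{\pi \in \Pi} \Pr_{y \sim \mu^{\ast}}\!\bigl[\phi(\pi(y)) = 1\bigr] \;=\; \frac{1}{|\Pi|} \sum_{\pi \in \Pi} \Pr_{y \sim \mu^{\ast}}\!\bigl[\phi_\pi(y) = 1\bigr],$$
using the hypothesis $\phi = \phi_\pi \circ \pi$. Each term is at most $\cor[\mu^{\ast}](f) \cdot |\phi_\pi| = \cor(f) \cdot |\phi|$ because $\phi_\pi \in \CNF{f}$ and $|\phi_\pi| = |\phi|$; dividing through by $|\phi|$ and taking the maximum over $\phi$ yields $\cor[\mu](f) \le \cor(f)$. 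The only real thing to get right is keeping the direction of composition consistent: the hypothesis is written as $\phi = \phi_\pi \circ \pi$, so $\mu$ must be built as a pushforward through $\pi$ (hence the $\pi^{-1}$ above) in order for the two $\pi$'s to cancel. With that convention pinned down, the rest is a one-line linearity-of-expectation calculation.
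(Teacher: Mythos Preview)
Your argument is correct and is essentially identical to the paper's: take an optimal hard distribution $\mu^{\ast}$, average it over $\Pi$, and verify that no $\phi\in\CNF{f}$ gains correlation under the averaged distribution by transferring each $\pi$ from the distribution onto the formula via the hypothesis. One small bookkeeping slip to fix: with your pushforward convention $x=\pi(y)$, the hypothesis $\phi=\phi_\pi\circ\pi$ (applied to $\pi^{-1}\in\Pi$) yields $\phi(\pi(y))=\phi_{\pi^{-1}}(y)$, not $\phi_\pi(y)$, but since $\Pi$ is a group the sum is unchanged after reindexing $\pi\mapsto\pi^{-1}$ and the conclusion stands; the paper avoids this by defining $\mu(x)=\Exp_{\pi}[\mu^{\ast}(\pi(x))]$ so that the substitution $y=\pi(x)$ makes the hypothesis apply directly.
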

\begin{proof}
  Let $\mu^*$ be a ``hard'' distribution $\mu$ 
  such that $\cor[\mu^*](f) = \cor(f)$.
  Define a distribution $\mu$ on $f^{-1}(1)$ as
  $\mu(x) = \Exp_{\pi \usample \Pi} [ \mu^*(\pi(x)) ]$
  for each $x \in f^{-1}(1)$,
  where the probability is taken over the uniform distribution over $\Pi$.
  Since $\Pi$ is a subgroup of the symmetric group on $\binset^n$,
  for any $\pi \in \Pi$ and $x \in f^{-1}(1)$,
  we have $\mu(\pi(x)) =\Exp_{\pi' \usample \Pi} [ \mu^*(\pi'(\pi(x))) ] =
  \Exp_{\sigma \usample \Pi} [ \mu^* (\sigma(x)) ] = \mu(x)$,
  where,
  in the second equality,
  we replaced $\pi' \circ \pi$ by $\sigma$
  and used the fact that $\Pi$ is a group.

  By the definition of $\mu^*$, we have $\cor[\mu^*](f) \le \cor[\mu](f)$.
  Hence, it suffices to claim that
  $\cor[\mu](f) \le \cor[\mu^*](f)$.
  Indeed, for any $\phi \in \CNF{f}$,
  \begin{align*}
     \Pr_{x \sample \mu} \left[ \phi(x) = 1 \right]
     \quad &= \sum_{x \in f^{-1}(1)} \mu(x) \phi(x)
    \\ & =
     \sum_{x \in f^{-1}(1)} \Exp_{\pi \usample \Pi} [ \mu^*(\pi(x)) ] \phi(x)
     \explain{by the definition of $\mu$}
    \\ & =
     \sum_{y \colon \pi^{-1}(y) \in f^{-1}(1)} \Exp_{\pi \usample \Pi} [ \mu^*(y) ] \phi(\pi^{-1}(y))
     \explain{by defining $y := \pi(x)$}
    \\ & =
     \sum_{y \in f^{-1}(1)} \Exp_{\pi \usample \Pi} [ \mu^*(y) ] \phi_\pi(y)
     \explain{by $f \circ \pi = f$ and $\phi = \phi_\pi \circ \pi$}
    \\ & =
     \Exp_{\pi \usample \Pi} \left[ \sum_{y \in f^{-1}(1)} \mu^*(y) \phi_\pi(y) \right]
     \\ & \le
     \Exp_{\pi \usample \Pi} \left[ \cor[\mu^*](f) \cdot |\phi_\pi| \right]
     = \cor[\mu^*](f) \cdot |\phi| 
     \explain{by $|\phi| = |\phi_\pi|$}
  \end{align*}
  It follows that $\cor[\mu](f) = \max_{\phi \in \CNF{f}} \Pr_{x \sample \mu} \left[ \phi(x) = 1 \right] / |\phi| \le \cor[\mu^*](f)$.
\end{proof}

In particular,
if a function $f \colon \binset^n \to \binset$ is symmetric
in the sense that the value of $f(x)$ depends only on the number of ones in $x$,
then 
one of the hardest distributions $\mu$
is also symmetric.
\begin{corollary}
  \label{cor:symmetric}
  If a function $f \colon \binset^n \to \binset$ is symmetric,
  then there exists a symmetric distribution $\mu$ such that $\cor(f) = \cor[\mu](f)$.
\end{corollary}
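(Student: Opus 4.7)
The plan is to apply Lemma \ref{lemma:symmetric} directly with $\Pi$ taken to be the image of the symmetric group $S_n$ embedded into the symmetric group on $\binset^n$ by coordinate permutation: for $\sigma \in S_n$, let $\pi_\sigma \colon \binset^n \to \binset^n$ be given by $\pi_\sigma(x_1, \dots, x_n) = (x_{\sigma^{-1}(1)}, \dots, x_{\sigma^{-1}(n)})$. This action is faithful, so $\Pi := \{\pi_\sigma : \sigma \in S_n\}$ is indeed a subgroup of the symmetric group on $\binset^n$. The symmetry of $f$ means precisely that $f(x)$ depends only on the Hamming weight of $x$, which is equivalent to $f = f \circ \pi$ for every $\pi \in \Pi$.

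Next I would verify the CNF condition. Given any $\phi \in \CNF{f}$ written as $\phi = \bigwedge_i C_i$ where each $C_i$ is a clause, and any $\pi = \pi_\sigma \in \Pi$, define $\phi_\pi$ by applying the variable relabeling $x_i \mapsto x_{\sigma(i)}$ to every literal in every clause of $\phi$; equivalently, $\phi_\pi(y) := \phi(\pi^{-1}(y))$. Then $\phi_\pi$ is again a CNF formula with exactly the same number of clauses, so $|\phi_\pi| = |\phi|$, and by construction $\phi = \phi_\pi \circ \pi$. Moreover, since $\phi^{-1}(1) \subset f^{-1}(1)$ and $f \circ \pi^{-1} = f$, we have $\phi_\pi^{-1}(1) = \pi(\phi^{-1}(1)) \subset \pi(f^{-1}(1)) = f^{-1}(1)$, so $\phi_\pi \in \CNF{f}$.

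With both hypotheses of Lemma \ref{lemma:symmetric} verified, the lemma provides a distribution $\mu$ on $f^{-1}(1)$ such that $\cor(f) = \cor[\mu](f)$ and $\mu \circ \pi = \mu$ for every $\pi \in \Pi$. The invariance under all coordinate permutations is exactly the statement that $\mu(x)$ depends only on the Hamming weight of $x$, i.e., $\mu$ is a symmetric distribution, completing the proof.

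There is no real obstacle here: the result is essentially bookkeeping on top of Lemma \ref{lemma:symmetric}, the only mildly delicate point being to set up $\phi_\pi$ so that both $\phi = \phi_\pi \circ \pi$ and the CNF structure (with identical clause count) are preserved, which is handled by relabeling variables according to $\sigma$ rather than $\sigma^{-1}$.
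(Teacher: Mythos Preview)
Your proof is correct and follows exactly the approach of the paper: take $\Pi$ to be the group of coordinate permutations of $\binset^n$ and invoke Lemma~\ref{lemma:symmetric}. The paper's proof is a single sentence to this effect, whereas you have spelled out the verification of the hypotheses (in particular the construction of $\phi_\pi$ by variable relabeling and the check that $\phi_\pi \in \CNF{f}$ with $|\phi_\pi| = |\phi|$), which is a welcome elaboration but not a different argument.
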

\begin{proof}
  Let $\Pi$ be the set of all permutations
  that are induced by the $n!$ permutations on $(x_1, \cdots, x_n)$,
  and apply Lemma \ref{lemma:symmetric}.
\end{proof}

In the case of $\PAR_n$, we can completely determine one of
the hardest distributions.
\begin{corollary}
  \label{cor:hard_dist_of_parity}
  Let $\mu$ be the uniform distribution on $\PAR_n^{-1}(1)$.
  Then we have $\cor(\PAR_n) = \cor[\mu](\PAR_n)$.
\end{corollary}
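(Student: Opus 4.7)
The plan is to invoke Lemma \ref{lemma:symmetric} with a group $\Pi$ strictly larger than the symmetric group used in Corollary \ref{cor:symmetric}: if the action of $\Pi$ on $\PAR_n^{-1}(1)$ is \emph{transitive}, then every $\Pi$-invariant distribution on $\PAR_n^{-1}(1)$ is forced to be uniform, and the corollary follows immediately.

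First I would define $\Pi$ to be the group of maps $\pi \colon \binset^n \to \binset^n$ of the form $\pi(x) = \sigma(x) \oplus a$, where $\sigma$ ranges over coordinate permutations and $a \in \binset^n$ ranges over vectors of even Hamming weight. These maps form a subgroup of the symmetric group on $\binset^n$ (the semidirect product of the even-weight translation subgroup with $S_n$), and each $\pi \in \Pi$ preserves parity since
\[
  \PAR_n(\sigma(x) \oplus a) = \PAR_n(x) \oplus \PAR_n(a) = \PAR_n(x).
\]
Next, for each $\phi \in \CNF{\PAR_n}$ and each $\pi \in \Pi$, I would set $\phi_\pi(y) := \phi(\sigma^{-1}(y \oplus a))$. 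This is obtained from $\phi$ by relabelling and negating a fixed set of literals in each clause, so $|\phi_\pi| = |\phi|$, and $\phi = \phi_\pi \circ \pi$ holds by direct substitution. Moreover, if $\phi_\pi(y) = 1$, then $\sigma^{-1}(y \oplus a) \in \phi^{-1}(1) \subset \PAR_n^{-1}(1)$, and parity preservation forces $\PAR_n(y) = 1$, so $\phi_\pi \in \CNF{\PAR_n}$. The hypotheses of Lemma \ref{lemma:symmetric} are therefore met, yielding a distribution $\mu$ on $\PAR_n^{-1}(1)$ with $\mu \circ \pi = \mu$ for all $\pi \in \Pi$ and $\cor(\PAR_n) = \cor[\mu](\PAR_n)$.

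Finally I would verify transitivity: given any $x, y \in \PAR_n^{-1}(1)$, the vector $a := x \oplus y$ has even weight (odd $\oplus$ odd), so the pure translation $z \mapsto z \oplus a$ lies in $\Pi$ and sends $x$ to $y$. Hence $\mu$ must assign equal weight to every element of $\PAR_n^{-1}(1)$, i.e.\ it is the uniform distribution, which is exactly the claim. The only real subtlety, and thus the step I would double-check most carefully, is that restricting to even-weight shifts $a$ is precisely what makes $\Pi$ preserve both $\PAR_n$ and the one-sided class $\CNF{\PAR_n}$; without this restriction the invariance fails, and without the XOR shifts (using only permutations as in Corollary \ref{cor:symmetric}) the action is not transitive on $\PAR_n^{-1}(1)$ and one only recovers a symmetric, not uniform, distribution.
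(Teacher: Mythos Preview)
Your proof is correct and follows essentially the same route as the paper: invoke Lemma \ref{lemma:symmetric} with a group of parity-preserving bijections that acts transitively on $\PAR_n^{-1}(1)$, namely the even-weight XOR shifts, forcing the invariant hard distribution to be uniform. The only difference is that the paper takes $\Pi$ to be just the even-weight translations $x \mapsto x \oplus a$, whereas you also throw in coordinate permutations; since your own transitivity argument uses only the translations, the permutations are superfluous and the two proofs coincide in substance.
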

\begin{proof}
  Let $\Pi$ be the set of all permutations $\pi \colon \binset^n \to \binset^n$
  that negates an even number of coordinates
  (\eg $\pi(x) = \neg{x_1} \cdot \neg{x_2} \cdot x_3 \cdots x_n$).
  Note that any permutation $\pi \in \Pi$ does not change the parity.
  By Lemma \ref{lemma:symmetric}, there exists a 
  distribution on $f^{-1}(1)$ such that $\mu \circ \pi = \mu$
  for any $\pi \in \Pi$ and $\cor(f) = \cor[\mu](f)$.
  We claim that $\mu(x) = \mu(1 0^{n-1})$ for any $x \in \PAR_n^{-1}(1)$.
  Indeed, 
  it is easy to see that,
  for any input $x$ whose parity is $1$,
  there exists a permutation $\pi_x \in \Pi$ that maps $x$ to $1 0^{n-1}$.
  Since $\mu \circ \pi_x = \mu$, we have $\mu(x) = \mu(\pi_x(x)) = \mu(1 0^{n-1})$.
  Hence, $\mu$ is the uniform distribution on $\PAR_n^{-1}(1)$.
\end{proof}

More generally,
Corollary \ref{cor:hard_dist_of_parity} establishes
an equivalence 
between
depth-$(d+1)$ formula lower bounds
and
one-sided-error correlation bounds of depth-$d$ formulas with $\PAR_n$ on the uniform distribution.
In particular, we obtain one-sided-error correlation bounds for $\AC^0[p]$
as stated in Theorem \ref{thm:worst_to_average}.

\begin{proof}
  [Proof of Theorem \ref{thm:worst_to_average}]
  We apply the duality theorem (Theorem \ref{thm:correlation})
  to $\OR \circ \AC^0[p]$ (namely, the class of circuits that consist of a top OR gate fed by disjoint $\AC^0[p]$ circuits).
  Recall that $\cor^{\AC^0[p]}(f)$ denotes the one-sided-error correlation per size of $\AC^0[p]$
  circuits with the function $f$.
  Theorem \ref{thm:correlation} shows that 
  $\cor^{\AC^0[p]}(f)$ is at most $O(n)$ times the inverse of the size of $\OR \circ \AC^0[p]$ circuits for
  computing $f$.
  Now we apply Smolensky's lower bound \cite{Smo87} on the parity function
  to $\OR \circ \AC^0[p]$ circuits, and obtain
  $\Pr[C(x) = 1] / |C| = \cor^{\AC^0[p]}(\PAR_n) \le (1 + n \cdot \ln 2) \cdot 2^{-n^{\Omega(1/(d+1))}} = 2^{-n^{\Omega(1/d)}}$
  for any $\AC^0[p]$ circuit $C$ of depth $d$.
  Here, by Corollary \ref{cor:hard_dist_of_parity},
  we may assume that the probability is taken over the uniform distribution $\mu$
  on $\PAR_n^{-1}(1)$.
  Therefore,
  \begin{align*}
    & \Pr_{x \sim \binset^n} [C(x) = \PAR_n(x)]
    \\ & = \frac{1}{2} + \frac{1}{2} \Pr_{x \sim \mu} [C(x) = 1]
    \explain{since $C(x) = 0$ if $\PAR_n(x) = 0$}
    \\ & \le \frac{1}{2} + \frac{1}{2} \cdot |C| \cdot 2^{-n^{\Omega(1/d)}}.
  \end{align*}
\end{proof}

\newcommand{\slice}[2]{S^{#1}_{#2}}
We can also compute one of the hardest distributions
for computing the majority function by \emph{monotone} formulas.
We write $\slice{n}{k}$ for the set of all inputs $x \in \binset^n$
such that the number of ones in $x$ is $k$.
\begin{proposition}
  \label{prob:hardest_for_maj}
  Let $\mu$ be the uniform distribution on $\slice{n}{\ceil{n/2}}$.
  Then, $\cor^+(\MAJ_n) = \cor[\mu]^+(\MAJ_n)$.
\end{proposition}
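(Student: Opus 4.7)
The plan is to first reduce to symmetric distributions via a monotone analogue of Corollary~\ref{cor:symmetric}, and then exploit the classical monotonicity of slice densities of up-sets to pin down $\slice{n}{\ceil{n/2}}$ as a hardest slice. The intuition is that $\phi^{-1}(1)$ is an up-set whenever $\phi$ is monotone, and $\slice{n}{\ceil{n/2}}$ is the lowest slice on which any $\phi \in \CNF{\MAJ_n}$ can possibly accept inputs; higher slices are automatically ``easier'' for $\phi$ since acceptance propagates upward.

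The first step is to observe that Lemma~\ref{lemma:symmetric} carries over verbatim to the monotone setting: coordinate permutations preserve both monotonicity and size of a CNF formula, and they fix $\MAJ_n$. Applying the proof of Corollary~\ref{cor:symmetric} to the monotone subclass of $\CNF{\MAJ_n}$ with $\Pi$ the full symmetric group on $\{1, \dots, n\}$, I obtain a permutation-invariant distribution $\mu^*$ on $\MAJ_n^{-1}(1)$ satisfying $\cor^+(\MAJ_n) = \cor[\mu^*]^+(\MAJ_n)$. Since two inputs in the same slice $\slice{n}{k}$ differ by a coordinate permutation, $\mu^*$ is constant on each slice, so it decomposes as $\mu^* = \sum_{k \ge \ceil{n/2}} p_k U_k$ where $U_k$ is the uniform distribution on $\slice{n}{k}$ and $(p_k)$ is a probability vector.

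The second step is a slice-density inequality: for any monotone $\phi \in \CNF{\MAJ_n}$, the quantity $\Pr_{x \sim U_k}[\phi(x) = 1]$ is non-decreasing in $k$. Writing $A := \phi^{-1}(1)$ (an up-set), a one-line double count of pairs $x \le y$ with $x \in A \cap \slice{n}{j}$ and $y \in \slice{n}{j+1}$ yields $(n-j)\,|A \cap \slice{n}{j}| \le (j+1)\,|A \cap \slice{n}{j+1}|$, which rearranges to $|A \cap \slice{n}{j}|/\binom{n}{j} \le |A \cap \slice{n}{j+1}|/\binom{n}{j+1}$. Applying this termwise to the decomposition of $\mu^*$ gives $\Pr_\mu[\phi = 1] \le \Pr_{\mu^*}[\phi = 1]$ for every monotone $\phi \in \CNF{\MAJ_n}$, where $\mu := U_{\ceil{n/2}}$. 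Dividing by $|\phi|$ and taking the maximum, we obtain $\cor[\mu]^+(\MAJ_n) \le \cor[\mu^*]^+(\MAJ_n) = \cor^+(\MAJ_n)$, and the reverse inequality holds by definition of $\cor^+$ as a minimum over distributions. I do not anticipate any serious obstacle: the monotone adaptation of Lemma~\ref{lemma:symmetric} is formally identical to the original argument, and the slice-density monotonicity is a standard Kruskal--Katona-flavored fact for which the direct double count is entirely elementary.
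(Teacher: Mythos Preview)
Your proposal is correct and follows essentially the same route as the paper's proof: reduce to a symmetric hardest distribution via the monotone version of Corollary~\ref{cor:symmetric}, then use the double-counting slice-density inequality $(n-j)\,|A\cap\slice{n}{j}|\le(j+1)\,|A\cap\slice{n}{j+1}|$ for up-sets to push all the weight of $\mu^*$ down to the bottom slice $\slice{n}{\ceil{n/2}}$. The only cosmetic difference is that the paper fixes an optimal $\phi$ for $\mu^*$ before writing the chain $\cor[\mu^*]^+(\MAJ_n)\cdot|\phi|=\Pr_{\mu^*}[\phi=1]\ge\Pr_{\mu}[\phi=1]$, whereas you establish $\Pr_{\mu}[\phi=1]\le\Pr_{\mu^*}[\phi=1]$ for all monotone $\phi$ and then take the maximum; both yield the same conclusion.
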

\begin{proof}
  As observed in \cite{BHST14},
  for any monotone function $f$,
  it holds that $\Pr_{x \usample \slice{n}{k}} [ f(x) = 1 ] \le \Pr_{x \usample \slice{n}{k+1}} [ f(x) = 1]$
  for any $k < n$.
  Indeed, by double counting,
  \begin{align*}
    \# \ISet{ x \in \slice{n}{k} } { f(x) = 1 } \cdot (n - k)
    & = \# \ISet { (x, y) \in \slice{n}{k} \times \slice{n}{k+1} } { f(x) = 1, x \le y }
    \\ & \le \# \ISet { (x, y) \in \slice{n}{k} \times \slice{n}{k+1} } { f(y) = 1, x \le y }
    \\ & = \# \ISet{ y \in \slice{n}{k+1} } { f(y) = 1 } \cdot (k + 1),
  \end{align*}
  and hence 
  $\Pr_{x \usample \slice{n}{k}} [ f(x) = 1 ] \le \Pr_{x \usample \slice{n}{k+1}} [ f(x) = 1]$
  (by noting that $\frac{n-k}{k+1} \binom{n}{k} = \binom{n}{k+1}$).
  Therefore, the uniform distribution on $\slice{n}{k}$
  is ``harder'' than $\slice{n}{k+1}$ for monotone formulas.
  Details follow.

  Since $\MAJ_n$ is a symmetric function, 
  by 
  Corollary \ref{cor:symmetric}
  there exists a symmetric distribution $\mu^*$ such that
  $\cor^+(\MAJ_n) = \cor[\mu^*]^+(\MAJ_n)$.
  (Note that, while Corollary \ref{cor:symmetric}
  is stated for nonmonotone formulas, the same proof can be applied
  to the case of monotone formulas.)
  Let $\phi$ be a monotone CNF formula that achieves
  the maximum one-sided correlation per size $\cor[\mu^*]^+(\MAJ_n)$.
  Then,
  \begin{align*}
  \cor[\mu^*]^+(\MAJ_n) \cdot |\phi|
  & = \Pr_{x \sample \mu^*} [ \phi(x) = 1 ]
  \\ & = \sum_{k = \ceil{n/2}}^n
  \Pr_{x \sim \mu^*} [x \in \slice{n}{k}]
  \cdot \Pr_{x \usample \slice{n}{k}} [ \phi(x) = 1 ]
  \explain{since $\mu^*$ is symmetric}
  \\ & \ge \sum_{k = \ceil{n/2}}^n
  \Pr_{x \sim \mu^*} [x \in \slice{n}{k}]
  \cdot \Pr_{x \usample \slice{n}{\ceil{n/2}}} [\phi(x) = 1] 
  \explain{since $\phi$ is monotone}
  \\ & = \Pr_{x \sample \mu} [ \phi(x) = 1 ].
  \end{align*}
  Hence, we have $\cor^+(\MAJ_n) \le \cor[\mu]^+(\MAJ_n) \le \cor[\mu^*]^+(\MAJ_n) = \cor^+(\MAJ_n)$.
\end{proof}

We can state the one-sided correlation per size $\cor^+(\MAJ_n)$
in terms of hypergraphs.
Specifically, the following holds.
\begin{corollary}
  \label{cor:characterizingmajority}
  For a hypergraph $\mathcal F$,
  let
  $|\mathcal F|$ denote the number of edges in $\mathcal F$,
  $\tau(\mathcal F)$ denote the minimum size of hitting sets of $\mathcal F$,
  and $t(\mathcal F)$ denote the number of minimum hitting sets of $\mathcal F$.
  Let $T(n, \tau) := \max_{\mathcal F \colon \tau(\mathcal F) = \tau} t(\mathcal F) / |\mathcal F|$,
  where the maximum is taken over all hypergraphs of $n$ vertices.
  Let $\formulathreemonotone(\MAJ_n)$ be the minimum depth-3 monotone formula size for computing $\MAJ_n$.
  Then, the following holds:
  \[
    \frac{1}{ T(n, \ceil{n/2})}
    \le \frac{\formulathreemonotone(\MAJ_n)} { \binom{n}{\ceil{n/2}} }
    \le \frac{1 + \ln 2 \cdot n}{ T(n, \ceil{n/2}) }.
\]
\end{corollary}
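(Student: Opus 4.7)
The plan is to combine the duality theorem with Proposition \ref{prob:hardest_for_maj} and then translate monotone CNF formulas into hypergraphs. First I would observe that the LP-duality proof of Theorem \ref{thm:correlation} carries over verbatim when the underlying circuit class is restricted to monotone CNF formulas: the associated set-cover instance is built from the family of monotone CNFs $\phi$ satisfying $\phi^{-1}(1) \subseteq \MAJ_n^{-1}(1)$, and the strong-duality and set-cover integrality-gap arguments are unchanged. This yields
\[
  (\cor^+(\MAJ_n))^{-1} \le \formulathreemonotone(\MAJ_n) \le (1 + n \cdot \ln 2) \cdot (\cor^+(\MAJ_n))^{-1}.
\]
By Proposition \ref{prob:hardest_for_maj}, I may replace $\cor^+(\MAJ_n)$ with $\cor[\mu]^+(\MAJ_n)$ for $\mu$ uniform on $\slice{n}{\ceil{n/2}}$, so it suffices to show $\cor[\mu]^+(\MAJ_n) = T(n, \ceil{n/2}) / \binom{n}{\ceil{n/2}}$.

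Next I would set up the dictionary between monotone CNF formulas and hypergraphs. A monotone CNF formula $\phi$ is a conjunction of clauses of the form $\bigvee_{i \in S} x_i$, and each such clause corresponds to a hyperedge $S \subseteq [n]$; writing $\mathcal{F}_\phi$ for the resulting hypergraph, we have $|\mathcal{F}_\phi| = |\phi|$, and $\phi(x) = 1$ if and only if the support $\{i : x_i = 1\}$ meets every edge of $\mathcal{F}_\phi$. The one-sided-error condition $\phi \in \CNF{\MAJ_n}$ thus becomes $\tau(\mathcal{F}_\phi) \ge \ceil{n/2}$, and conversely every hypergraph with this property arises from some such $\phi$. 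Now for $x \sim \mu$ the support of $x$ is a uniformly random $\ceil{n/2}$-subset of $[n]$, so $\Pr_{x \sim \mu}[\phi(x) = 1]$ equals the fraction of $\ceil{n/2}$-subsets of $[n]$ that are hitting sets of $\mathcal{F}_\phi$. When $\tau(\mathcal{F}_\phi) > \ceil{n/2}$ this fraction is $0$, and when $\tau(\mathcal{F}_\phi) = \ceil{n/2}$ it equals $t(\mathcal{F}_\phi) / \binom{n}{\ceil{n/2}}$.

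Maximizing $\Pr_{x \sim \mu}[\phi(x) = 1] / |\phi|$ over admissible monotone CNFs $\phi$ is therefore the same as maximizing $t(\mathcal{F}) / (|\mathcal{F}| \binom{n}{\ceil{n/2}})$ over hypergraphs $\mathcal{F}$ on $[n]$ with $\tau(\mathcal{F}) = \ceil{n/2}$, which by definition equals $T(n, \ceil{n/2}) / \binom{n}{\ceil{n/2}}$. Substituting this value into the first display gives the stated bounds. The main step to justify is the monotone version of Theorem \ref{thm:correlation}, but this reduces to noting that a depth-3 monotone formula decomposes as an OR of monotone CNFs and that the LP-duality proof already works circuit-class-by-circuit-class; I expect this bookkeeping to be the only nontrivial point.
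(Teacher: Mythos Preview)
Your proposal is correct and follows essentially the same approach as the paper: translate monotone CNF formulas to hypergraphs via the clauses-to-edges dictionary, identify $\phi(x)=1$ with ``support of $x$ is a hitting set,'' and combine Proposition~\ref{prob:hardest_for_maj} with the (monotone version of the) duality theorem to obtain $T(n,\ceil{n/2}) = \binom{n}{\ceil{n/2}}\cor^+(\MAJ_n)$. You are in fact slightly more careful than the paper in two places---you note explicitly that $\phi^{-1}(1)\subset\MAJ_n^{-1}(1)$ corresponds to $\tau(\mathcal F_\phi)\ge\ceil{n/2}$ (not $=$) and dispose of the strict-inequality case, and you spell out that the monotone instance of Theorem~\ref{thm:correlation} needs to be invoked---but these are exactly the missing bookkeeping steps, not a different argument.
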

\begin{proof}
  Given a monotone CNF formula $\phi$,
  we define a hypergraph
  $\mathcal F$ so that
  each edge of $\mathcal F$
  is the set of literals in a clause of $\phi$.
  (Note that, since $\phi$ is a monotone CNF,
  every literal is a positive.)
  We can naturally identify $x \in \binset^n$ with $x \subset \NumSet{n}$.
  It is easy to see that
  $\phi$ accepts $x \in \binset^n$ if and only if $x \subset \NumSet{n}$ is
  a hitting set of $\mathcal F$.
  Therefore, 
  the constraint that $\phi^{-1}(1) \subset \MAJ_n^{-1}(1)$
  corresponds to the constraint that $\tau(\mathcal F) = \ceil{n/2}$.
  The number $|\mathcal F|$ of edges in $\mathcal F$ is equal to $|\phi|$.
  Moreover, 
  $t(\mathcal F) = \binom{n}{\ceil{n/2}} \cdot \Pr_{x \usample \slice{n}{\ceil{n/2}}} [\phi(x) = 1]$.
  Therefore, $T(n, \ceil{n/2}) = \binom{n}{\ceil{n/2}} \cor^+(\MAJ_n)$
  and the result follows from Theorem \ref{thm:correlation}.
\end{proof}

In the case of graphs (instead of hypergraphs),
the extremal structure
that
maximizes the number of hitting sets
under the constraint
that
the minimum size of
hitting sets is bounded from below
is well understood.
Thus, in the case of graphs, 
or equivalently,
in the case of bottom fan-in 2,
we are able to determine the minimum depth-3 circuit size
for computing the majority function.
\begin{proposition}
  \label{prop:monotonemaj}
  The minimum size of
  depth-3 monotone circuits
  with bottom fan-in 2
  for computing $\MAJ_n$
  is $\widetilde \Theta (2^{n / 2})$.
\end{proposition}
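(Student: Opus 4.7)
The plan is to specialize Corollary~\ref{cor:characterizingmajority} to the bottom-fan-in-2 setting, where the relevant hypergraphs become simple graphs, and then invoke Zykov's extension of Tur\'an's theorem to evaluate the extremal quantity. A depth-3 monotone circuit with bottom fan-in 2 has the form $\bigvee_{i=1}^m \bigwedge_{\{u,v\} \in E(G_i)}(x_u \vee x_v)$ for some graphs $G_1,\dots,G_m$ on $[n]$ (treating any unit clauses as self-loops). The $i$-th middle AND gate accepts exactly the vertex covers of $G_i$, and since the bottom layer contains at most $\binom{n}{2}+n=O(n^2)$ distinct clauses, the total circuit size is $\Theta(m)+O(n^2)$, so it suffices to pin down $m$ up to polynomial factors of $2^{n/2}$.

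For the lower bound, soundness forces $\tau(G_i) \geq \lceil n/2 \rceil$ for every $i$, and completeness requires each minimum majority input $x$ (with $|x|=\lceil n/2\rceil$) to be a minimum vertex cover of some $G_i$. By complementation, the size-$\lceil n/2\rceil$ vertex covers of $G_i$ correspond bijectively to the maximum cliques of $\overline{G_i}$, which is $K_{\lfloor n/2\rfloor+1}$-free. Zykov's extension of Tur\'an's theorem states that for any $K_{r+1}$-free graph on $n$ vertices, the number of $r$-cliques is at most that of the Tur\'an graph $T(n,r)$; setting $r=\lfloor n/2\rfloor$ yields $t(G_i) \leq O(2^{n/2})$, with the extremum achieved when $G_i$ is a (near-)perfect matching. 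Summing over $i$ then gives $m \geq \binom{n}{\lceil n/2\rceil}/O(2^{n/2}) = \widetilde\Omega(2^{n/2})$, and hence circuit size $\widetilde\Omega(2^{n/2})$.

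For the matching upper bound, I would use a probabilistic construction: draw $m=O(\sqrt n \cdot 2^{n/2} \log n)$ independent uniformly random (near-)perfect matchings $M_1,\dots,M_m$ on $[n]$ and let $G_i:=M_i$. A direct counting argument shows that for any fixed $x$ of weight $\lceil n/2 \rceil$ the probability that $x$ is a vertex cover (i.e., transversal) of a uniformly random perfect matching equals $2^{\lfloor n/2\rfloor}/\binom{n}{\lceil n/2\rceil}=\Theta(\sqrt n / 2^{n/2})$. A union bound over all $\binom{n}{\lceil n/2\rceil}$ minimum majority inputs then shows that with positive probability all of them are covered, and every higher-weight majority input is automatically covered because vertex covers are upward-closed. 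The resulting circuit has size $m+O(n^2)=\widetilde O(2^{n/2})$.

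The main obstacle is applying Zykov's extremal result cleanly via graph complementation: one must verify that the size-$\lceil n/2\rceil$ vertex-cover count of $G_i$ (rather than the trivial $\binom{n}{\lceil n/2\rceil}$) is controlled by the Tur\'an-extremal clique count, and one must check that the extremum is indeed attained by the complement of a (near-)perfect matching rather than by some other disjoint-cliques configuration. A secondary, purely bookkeeping issue is that for odd $n$ the Tur\'an graph $T(n,\lfloor n/2\rfloor)$ has one part of size three rather than two, so the extremal count becomes $O(2^{(n-1)/2})$ rather than exactly $2^{n/2}$; this affects only constants in the final asymptotics.
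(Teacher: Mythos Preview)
Your proposal is correct. The lower bound is essentially identical to the paper's: both invoke the Zykov/Erd\H{o}s extremal result to bound the number of minimum vertex covers (equivalently, minimum hitting sets) of a graph with $\tau \ge \lceil n/2\rceil$ by $O(2^{n/2})$, with the perfect matching being extremal. Your complementation to cliques in $\overline{G_i}$ is just a reformulation of the same fact.

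The upper bound, however, takes a genuinely different route. The paper does not construct a circuit at all: it observes that the perfect matching already witnesses correlation-per-size $\widetilde\Theta(2^{-n/2})$, then applies Corollary~\ref{cor:characterizingmajority} (hence Theorem~\ref{thm:correlation}, i.e., the LP relaxation of set cover and its logarithmic integrality gap) to convert this into a \emph{formula} upper bound, and finally uses the fact that with bottom fan-in~2 circuits and formulas differ by at most $O(n^2)$. You bypass the duality machinery entirely and give a direct randomized covering argument with uniformly random (near-)perfect matchings. Your probability computation $2^{\lfloor n/2\rfloor}/\binom{n}{\lceil n/2\rceil}$ and the union bound are correct, and the monotonicity of vertex covers handles higher-weight inputs as you say.

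What each buys: the paper's route illustrates that the proposition is an instance of the general duality theme and yields the extremal quantity $T(n,\lceil n/2\rceil)$ as a byproduct; your route is more self-contained and produces an explicit (randomized) circuit, at the cost of not exercising the paper's main theorem. Note that despite your opening sentence, your argument as written does not actually use Corollary~\ref{cor:characterizingmajority}; both directions are proved directly.
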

\begin{proof}
  It is well known that 
  the extremal structure
  that maximizes
  the number of hitting sets (of each cardinality)
  given the constraint that $\tau(\mathcal F) = \tau$
  is the complement of Tur\'an's graph \cite{Tur41}
  (\ie a disjoint union of cliques of almost equal size;
  hence, in our case, it is a disjoint union of $n/2$ edges).
  (This fact was independently proved by many people,
  such as 
  Zykov \cite{Zyk49},
  Erd{\H{o}}s \cite{Erd62}, Sauer \cite{Sau71},
  Had{\v{z}}iivanov \cite{Had76}, and Roman \cite{Rom76}.
  For a recent simple proof, see Cutler and Radcliffe \cite{CR11}.)

  The number of minimum hitting sets is $O(2^{n / 2})$.
  Since the number of edges in a graph of $n$ vertices is at most $n^2 / 2$,
  the maximum correlation per size is $\widetilde\Theta(2^{- n / 2})$.
  Moreover, depth-3 circuit size and depth-3 formula
  with bottom fan-in 2 are the same up to a factor of $O(n^2)$ (\eg see \cite[Proposition 2.2]{PSZ00}).
  Thus, the minimum circuit size is equal to $2^{n / 2}$
  up to a polynomial factor.
\end{proof}

We remark that the same quantitative bound can be obtained by using the
techniques of \cite{HJP95}.
Proposition \ref{prop:monotonemaj} merely gives its alternative proof
from the techniques of the different literature.

\section{Approximating Monotone Functions by Depth-2 Formulas}
\label{sec:approxmonotone}

It is reasonable to conjecture that an analog of Proposition \ref{prob:hardest_for_maj} holds for
\emph{nonmonotone} CNF formulas: that is, we conjecture that the uniform distribution on $\slice{n}{\ceil{n/2}}$
is close to the hardest distribution of CNF formulas approximating $\MAJ_n$.
And it is tempting to guess that, since $\MAJ_n$ is a monotone function,
an optimal CNF formula approximating $\MAJ_n$ should be a monotone CNF formula.
This is true when \emph{exact} complexity of computing monotone function is concerned, by Quine's theorem \cite{Qui54}.
It is, however, not true when approximation is concerned:
Blais, H{\aa}stad, Servedio and Tan \cite{BHST14} showed that 
there is a monotone function on $n$ variables that can be approximated by some CNF formula of size $O(n)$,
but cannot be approximated by any monotone CNF formula of size less than $n^2$.

Here we significantly improve their bounds:
\begin{theorem}
  [Theorem \ref{thm:monotoneapprox_informal}, restated formally]
  \label{thm:monotoneapprox}
  There exists a function $\epsilon \colon \Nat \to (0, 1)$ and a family of monotone Boolean functions $f = \{ f_n \colon \binset^n \to \binset \}_{n \in \Nat}$ and 
  a family of distributions $\{ \mu_n \}_{n \in \Nat}$ on $f_n^{-1}(1)$ such that
  \begin{itemize}
    \item
      there is some CNF formula $\{ \phi_n \}_{n \in \Nat}$ of size at most $n^{O(1)}$ satisfying $\phi_n^{-1}(1) \subset f_n^{-1}(1)$ and 
      $\Pr_{x \sim \mu_n} [\phi_n(x) = 1] \le \epsilon(n)$, whereas
    \item
      any monotone CNF formula $\{ \phi_n^+ \}_{n \in \Nat}$ satisfying $(\phi_n^+)^{-1}(1) \subset f_n^{-1}(1)$ and 
      $\Pr_{x \sim \mu_n} [\phi_n^+(x) = 1] \le \epsilon(n)$
      requires size $2^{n^{\Omega(1)}}$.
 \end{itemize}
\end{theorem}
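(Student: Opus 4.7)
The plan is to invoke the recent monotone-vs-nonmonotone depth-3 formula separation of Chen, Oliveira, and Servedio \cite{COS15} and then convert it into an approximation separation via Theorem \ref{thm:correlation}. The reference \cite{COS15} exhibits a monotone function $f_n$ with $\formulathree(f_n) \le n^{O(1)}$ but whose monotone depth-3 formula complexity is at least $2^{n^{\Omega(1)}}$. The proof of Theorem \ref{thm:correlation} goes through verbatim for the subclass of monotone CNFs — restrict the LP's variables $x_\phi$ to monotone $\phi$ and note that the Lov\'asz--Chv\'atal integrality-gap bound is unaffected — yielding the monotone analogue $(\cor^+(f))^{-1} \le \formulathreemonotone(f) \le (1 + n \ln 2)(\cor^+(f))^{-1}$. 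Applied to $f_n$, this gives $\cor^+(f_n) \le 2^{-n^{\Omega(1)}}$.

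I would then take $\mu_n$ on $f_n^{-1}(1)$ to be a distribution witnessing $\cor[\mu_n]^+(f_n) = \cor^+(f_n)$; such a distribution exists by the minimum definition. For the easy direction, fix the polynomial-size depth-3 representation $f_n = \phi_{n,1} \vee \cdots \vee \phi_{n,m}$ guaranteed by \cite{COS15}, with $\sum_i |\phi_{n,i}| \le n^{O(1)}$ and each $\phi_{n,i} \in \CNF{f_n}$. Since $\mu_n$ is supported on $f_n^{-1}(1) = \bigcup_i \phi_{n,i}^{-1}(1)$, we have $\sum_i \Pr_{x \sim \mu_n}[\phi_{n,i}(x)=1] \ge 1$, so by averaging some index $i^*$ satisfies $\Pr_{x \sim \mu_n}[\phi_{n,i^*}(x) = 1] \ge |\phi_{n,i^*}| / n^{O(1)} \ge 1/n^{O(1)}$. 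Set $\phi_n := \phi_{n,i^*}$ and $\epsilon(n) := \Pr_{x \sim \mu_n}[\phi_n(x) = 1]$, which gives $|\phi_n| \le n^{O(1)}$ and $\epsilon(n) \ge 1/n^{O(1)}$.

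For the hard direction, any monotone $\phi_n^+ \in \CNF{f_n}$ with $\Pr_{x \sim \mu_n}[\phi_n^+(x) = 1] \ge \epsilon(n)$ satisfies, directly from the definition of $\cor[\mu_n]^+(f_n)$,
\[
|\phi_n^+| \;\ge\; \frac{\Pr_{x \sim \mu_n}[\phi_n^+(x) = 1]}{\cor[\mu_n]^+(f_n)} \;\ge\; \epsilon(n) \cdot 2^{n^{\Omega(1)}} \;\ge\; 2^{n^{\Omega(1)}},
\]
absorbing the $1/n^{O(1)}$ loss into the exponent. The entire content of the theorem is really this translation of the worst-case separation of \cite{COS15} into an approximation separation via duality, so there is no substantial obstacle; the main thing to verify carefully is that the LP-duality proof of Theorem \ref{thm:correlation} restricts cleanly to the monotone class, which it does since the construction and integrality-gap analysis are oblivious to which CNFs are allowed as feasible covering sets.
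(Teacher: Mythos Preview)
Your proposal is correct and follows essentially the same route as the paper: invoke \cite{COS15}, apply the monotone version of Theorem~\ref{thm:correlation} to obtain $\cor^+(f_n)\le 2^{-n^{\Omega(1)}}$ and a witnessing distribution $\mu_n$, then use the nonmonotone side to find a small CNF with inverse-polynomial correlation under $\mu_n$. Your treatment of the easy direction is in fact slightly more explicit than the paper's --- you extract $\phi_n$ directly from the polynomial-size depth-3 decomposition via averaging, which makes the bound $|\phi_n|\le n^{O(1)}$ transparent, whereas the paper appeals to the inequality $\cor(f_n)\ge 1/\formulathree(f_n)$ and leaves the size bound implicit.
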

In fact, this theorem is an immediate consequence of our duality theorem together with
recently improved Ajtai-Gurevich's theorem \cite{AG87}.
\begin{theorem}
  [Chen, Oliveira and Servedio \cite{COS15}]
  \label{thm:COS15}
  There exists a family of monotone functions $f = \{ f_n \colon \binset^n \to \binset \}_{n \in \Nat}$
  such that
  \begin{enumerate}
    \item
      $\formulathree(f_n) = n^{O(1)}$, and
    \item
      any monotone depth-$d$ circuit%
      \footnote{In fact, their result gives a lower bound for monotone \emph{majority} circuits.
      For our purpose, it is sufficient to use a lower bound for monotone $\textrm{OR} \circ \textrm{AND} \circ \textrm{OR}$ circuits.}
      computing $f$
      requires size $2^{n^{\Omega(1/d)}}$.
  \end{enumerate}
\end{theorem}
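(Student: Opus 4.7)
The plan is to construct an explicit family of monotone functions $f_n$ (most likely a monotonized version of a depth-3 circuit separator such as the Sipser function, or a slice-based construction in the spirit of Ajtai--Gurevich) and prove the two properties separately. For the non-monotone depth-3 upper bound, I would design $f_n$ so that its $1$-set is supported on (or closely related to) a slice of the Boolean cube, allowing non-monotone literals to express the slice constraint for free and thereby reduce the formula to a polynomial-size $\OR \circ \AND \circ \OR$ representation of a simple combinatorial relation.

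For the monotone depth-$d$ lower bound, the plan is to use a monotone random restriction argument, in the style of the depth-hierarchy works adapted from H{\aa}stad's switching lemma to the monotone setting. Concretely, I would pick a random restriction biased toward the ``hard slice'' so that: (i) with high probability the function $f_n$ restricted by $\rho$ remains a hard sub-function on the surviving variables, and (ii) any small monotone depth-$d$ circuit simplifies under $\rho$ to a shallower monotone circuit. Iterating the switching lemma $d-1$ times reduces the hypothetical monotone depth-$d$ circuit to a small monotone DNF (or CNF) on $n^{\Omega(1/d)}$ surviving variables, and then apply a known monotone DNF lower bound (e.g., via a minterm/maxterm counting or an application of Razborov's approximation method) to reach a contradiction. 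Summing the failure probabilities of the switching steps against the circuit size budget $2^{n^{\Omega(1/d)}}$ yields the stated bound.

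The main obstacle, and the source of the quantitative improvement over Ajtai--Gurevich, is designing $f_n$ so that both horns of the argument fit together: the function must be ``non-monotonically simple'' enough to admit a polynomial-size depth-3 formula, yet sufficiently ``monotone-generic'' on the relevant slice that the monotone switching lemma applies cleanly and the base-case DNF lower bound is strong. In particular, one must ensure that every clause of a small monotone depth-$d$ circuit becomes shallow under the biased restriction with probability exponentially close to $1$ in the restriction's surviving-variable count; this typically requires tuning the bias and the structural parameters of $f_n$ (block sizes, fan-ins, slice weight) against each other.

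Finally, I would verify that the lower bound holds in particular against monotone $\OR \circ \AND \circ \OR$ circuits, which is all that is needed for the application in Theorem~\ref{thm:monotoneapprox}, even though the full statement gives the bound for every depth $d$. This special case may admit a more direct proof via Karchmer--Wigderson style or sunflower-based arguments on the monotone CNFs feeding the top OR, which could be used as a warm-up and sanity check before attacking the general-$d$ result.
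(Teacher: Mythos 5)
You should first note how the paper itself handles this statement: it is imported wholesale as a black box from Chen--Oliveira--Servedio \cite{COS15} (an improvement of the Ajtai--Gurevich theorem), and no proof is given or needed in this paper. Your proposal, by contrast, is an attempt to re-derive that external result from scratch, and as it stands it is a research plan rather than a proof: the hard function $f_n$ is never defined, and both halves of the argument are deferred to ``tuning'' of unspecified parameters. Two concrete problems stand out. First, the restriction-plus-switching-lemma strategy, as you describe it, never uses monotonicity of the \emph{circuit}: H{\aa}stad-style switching collapses any small-width bottom-level CNF/DNF, monotone or not, and ``$f_n$ restricted by $\rho$ remains hard'' is a property of the function alone. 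If the argument worked as sketched, it would apply verbatim to the polynomial-size \emph{non-monotone} depth-3 formula guaranteed by item 1 and yield a contradiction. The entire difficulty of the theorem is to find a lower-bound technique that exploits monotonicity of the circuit while the target function is non-monotonically easy; your sketch does not identify where monotonicity enters, and there is no off-the-shelf ``monotone switching lemma'' that does this for you. The actual proof in \cite{COS15} (for a pseudo-addition/counting-type function) does not follow this route.

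Second, the proposed construction is internally problematic. A monotone function cannot have its $1$-set ``supported on a slice'' (the $1$-set must be upward closed), so you presumably mean a slice function; but for slice functions negations can be eliminated using monotone threshold gates, which is exactly why the stronger form of the \cite{COS15} bound (against monotone circuits with majority gates, as noted in the paper's footnote) cannot hold for a slice-based $f_n$, and even for the $\mathrm{AND}/\mathrm{OR}$-only version you would need to argue carefully that no such negation-elimination is available in small depth. So the proposal is not a correct alternative proof: it is missing the key idea that makes the monotone lower bound go through, and its candidate construction is in tension with the statement being proved. For the purposes of this paper, the right move is simply to cite \cite{COS15}, as the author does.
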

Using our duality theorem,
we now transfer Theorem \ref{thm:COS15} about depth-3 circuits to
Theorem \ref{thm:monotoneapprox} about approximation by depth-2 formulas.
\begin{proof}
  [Proof of Theorem \ref{thm:monotoneapprox}]
  By the second item of Theorem \ref{thm:COS15},
  $\formulathreemonotone(f_n) \ge 2^{n^{\Omega(1)}}$.
  By Theorem \ref{thm:correlation}, $\cor^+(f_n) \le O(n) / \formulathreemonotone(f_n) = 2^{- n^{\Omega(1)}}$.
  That is, there exists a distribution $\mu_n$ on $f_n^{-1}(1)$ such that
  \[
    \Pr_{x \sim \mu_n} [\phi^+(x) = 1] \le |\phi^+| \cdot 2^{-n^{\Omega(1)}}
  \]
  for any monotone CNF formula $\phi^+$.
  In particular, for any monotone CNF formula $\phi^+$ of size less than $2^{\frac{1}{2} \cdot n^{\Omega(1)}}$,
  we have $\Pr_{x \sim \mu_n} [\phi^+(x) = 1] \le 2^{- \frac{1}{2} n^{\Omega(1)} } = 2^{- n^{\Omega(1)}}$.

  On the other hand, by Theorem \ref{thm:correlation} and the first item of Theorem \ref{thm:COS15},
  we obtain $\cor(f) \ge 1 / \formulathree(f) \ge n^{-O(1)}$.
  Hence, for the distribution $\mu_n$ defined above, there exists a CNF formula $\phi_n$ such that
  \begin{align}
    \label{eq:correlationlowerbound}
    \Pr_{x \sim \mu_n} [ \phi_n(x) = 1 ] \ge |\phi_n| \cdot n^{-O(1)} \ge n^{-O(1)}
  \end{align}
  since $|\phi_n| \ge 1$.
  Define $\epsilon(n) = n^{-O(1)}$ to be the rightmost lower bound in \eqref{eq:correlationlowerbound}.
\end{proof}

\section{Universal Bounds on Approximation by CNFs}
\label{sec:universalbound}
In this section, we prove a tight bound on the maximum size of CNF formulas approximating a function with one-sided error.
We say that a CNF formula $\phi$ approximates a function $f$
with (one-sided error and) advantage $\epsilon$ if $\Pr_{x \sim f^{-1}(1)} [\phi(x) = 1] \ge \epsilon$ and $\lessthan \phi f$.
\begin{theorem}
  [Theorem \ref{thm:universalapproximator_informal}, restated formally]
  \label{thm:universalapproximator}
  For all large $n \in \Nat$,
  the following holds.
  \begin{itemize}
    \item
      For any function $f \colon \binset^n \to \binset$,
    there exists some $\epsilon \in (0, 1]$ and some CNF formula
      of size at most $\epsilon2^{n+3} / n$
      approximating $f$ with one-sided error and advantage $\epsilon$.
    \item
      There exists a function $f \colon \binset^n \to \binset$
      such that,
    for any $\epsilon \in (0,1]$,
      any CNF formula approximating $f$ with one-sided error and advantage $\epsilon$
      must have size at least $\epsilon 2^{n-7} / n$.
  \end{itemize}
\end{theorem}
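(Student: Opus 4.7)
\emph{Upper bound.} The plan is to combine an adapted Lupanov construction with our duality theorem. First, I would adapt Lupanov's classical depth-3 construction \cite{Lup65} to show that $\formulathree(f) \le O(2^n/n)$ for every $f \colon \binset^n \to \binset$, measuring size by the number of clauses (this is essentially immediate from Lupanov's bound on literals, since every clause contains at least one literal). Then Theorem \ref{thm:correlation} yields $\cor(f) \ge 1/\formulathree(f) = \Omega(n/2^n)$. Specializing $\cor(f) = \min_\mu \cor[\mu](f)$ to $\mu$ the uniform distribution on $f^{-1}(1)$ gives some CNF $\phi \in \CNF{f}$ with $\Pr_{x \sim \mu}[\phi(x)=1]/|\phi| \ge \Omega(n/2^n)$; setting $\epsilon := \Pr_{x \sim \mu}[\phi(x)=1]$ yields a one-sided $\epsilon$-approximator of size $|\phi| \le \epsilon \cdot O(2^n/n)$, matching the claimed $\epsilon \cdot 2^{n+3}/n$ once the constant in Lupanov's bound is tracked.

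\emph{Lower bound.} Here using duality directly would lose a factor of $n$ through the set-cover integrality gap, so I would argue by counting on a random function. A uniformly random $f \colon \binset^n \to \binset$ satisfies $|f^{-1}(1)| \ge 2^{n-1}$ with probability at least $1/2$ by symmetry. The number of CNFs on $n$ variables with $s$ clauses is at most $(3^n)^s$ (each clause is a positive/negative/absent choice per variable), and for a fixed $\phi$ with $|\phi^{-1}(1)| \ge T$, $\Pr_f[\phi \le f] = 2^{-T}$. Picking any constant $C > \log 3$, a union bound shows that $\sum_{s \ge 1} 3^{ns} \cdot 2^{-Cns} = o(1)$, so with positive probability a random $f$ has both $|f^{-1}(1)| \ge 2^{n-1}$ \emph{and} the property that $|\phi^{-1}(1)| \le Cn|\phi|$ for every $\phi \in \CNF{f}$. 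For such an $f$, any $\phi$ approximating $f$ with advantage $\epsilon$ satisfies $|\phi^{-1}(1)| \ge \epsilon \cdot 2^{n-1}$, hence $|\phi| \ge \epsilon \cdot 2^{n-1}/(Cn)$; since $C$ can be chosen small, this meets the required $\epsilon \cdot 2^{n-7}/n$ with room to spare. Crucially, this single $f$ works simultaneously for all $\epsilon \in (0,1]$ because the ``no bad $\phi$'' condition is uniform in $\epsilon$.

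\emph{Main obstacle.} The most delicate step is the adapted Lupanov construction with a small explicit constant; the classical bound is typically stated only asymptotically, and pinning down the factor hidden in $O(2^n/n)$ (to get at most $8 \cdot 2^n / n$ as the upper-bound constant $2^{n+3}/n$ demands) requires care in choosing the block partition sizes. The counting argument for the lower bound is otherwise routine and the slack between $1/C$ and $2^{-7}$ is quite generous.
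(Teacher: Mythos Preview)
Your overall plan matches the paper's, but your parenthetical justification for the clause bound is wrong. Lupanov's classical bound is $O(2^n/\log n)$ \emph{literals}, and the inequality ``clauses $\le$ literals'' only yields $O(2^n/\log n)$ clauses, not the $O(2^n/n)$ you need. The improvement from $\log n$ to $n$ is \emph{not} free: the paper goes back into Lupanov's construction, covers $\{0,1\}^D$ by Hamming spheres of radius~$1$ (so each sphere-restricted function is a single clause), and crucially sets $D \approx n/2$ rather than the $D \approx \log n$ that optimizes the literal count. With that parameter choice the clause count is $2^D/D \cdot (D^2 + 2^{n-D}) = O(2^n/n)$, and tracking constants gives the $2^{n+3}/n$ you want. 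So your ``main obstacle'' is real, but it is about changing the block size, not just bookkeeping a constant; your parenthetical would leave you stuck at $O(2^n/\log n)$. Once the lemma is in hand, your use of duality and the choice $\epsilon := \Pr_{x\sim\mu}[\phi(x)=1]$ is exactly what the paper does.

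Your lower-bound argument is correct and is actually a bit cleaner than the paper's. The paper fixes $\epsilon$, splits on whether $|\phi^{-1}(1)| \gtrless \epsilon\,2^{n-2}$ (using Chernoff on $|f^{-1}(1)|$ in one branch and $\Pr[\phi\le f]=2^{-|\phi^{-1}(1)|}$ in the other), takes a union bound over a dyadic grid of $\epsilon$'s, and then extends to arbitrary $\epsilon$ by rounding. You instead isolate a single $\epsilon$-free bad event---``some $\phi$ with $\phi\le f$ has $|\phi^{-1}(1)|>Cn|\phi|$''---and kill it with one union bound $\sum_{s\ge 1} 3^{ns}2^{-Cns}=o(1)$ for any $C>\log_2 3$; combined with $|f^{-1}(1)|\ge 2^{n-1}$ this immediately gives the bound for every $\epsilon$ at once. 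This buys you uniformity in $\epsilon$ without the discretization step and even a better constant ($2^{n-2}/n$ versus the paper's $2^{n-7}/n$). One nitpick: ``$C$ can be chosen small'' is misleading---you need $C>\log_2 3\approx 1.585$---but any such $C$ already clears $2^{-7}$ comfortably.
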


In order to obtain the upper bound, we prove the following lemma.
\begin{lemma}
  \label{lemma:universalbound}
  $\formulathree(f) \le 2^{n+3} / n$ for any function $f$.
\end{lemma}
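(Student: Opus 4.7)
My plan is to adapt Lupanov's classical depth-3 universal construction to the size measure that counts middle-layer AND gates (i.e., clauses) rather than literals. I would split the variables as $x = (y, z)$ with $|y| = k$, partition the $y$-space $\binset^k$ into subcube blocks $B_1, \dots, B_r$ of equal size $m = 2^l$, and for each block $B_j$ and each pattern $\sigma : B_j \to \binset$ that arises as the restriction $f(\cdot, z)|_{B_j}$ for some $z$, form the summand $\phi_{j,\sigma}(y,z) := [y \in B_{j,\sigma}] \wedge [z \in A_{j,\sigma}]$, where $B_{j,\sigma} = \{y \in B_j : \sigma(y) = 1\}$ and $A_{j,\sigma} = \{z \in \binset^{n-k} : f(\cdot, z)|_{B_j} = \sigma\}$.

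The first step is to verify $f = \bigvee_{j,\sigma} \phi_{j,\sigma}$ with each $\phi_{j,\sigma} \le f$: any $(y, z) \in f^{-1}(1)$ is captured by the unique pair $(j, \sigma)$ with $y \in B_j$ and $\sigma = f(\cdot, z)|_{B_j}$, and conversely each summand forces $f(y, z) = \sigma(y) = 1$. The second step is the clause count: each $\phi_{j,\sigma}$ is directly realizable as a CNF of size at most $(k - l) + (m - |\sigma^{-1}(1)|) + (2^{n-k} - |A_{j,\sigma}|)$, broken into subcube-addressing clauses for $B_j$, exclusion clauses for $B_j \setminus \sigma^{-1}(1)$, and exclusion clauses for $z \notin A_{j,\sigma}$. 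Summing over all non-trivial $(j,\sigma)$ pairs and using the partition identity $\sum_\sigma |A_{j,\sigma}| = 2^{n-k}$ for each fixed $j$ (since every $z$ realizes a unique $\sigma$), the cross term telescopes and the total becomes a combination of $r(k - l)$, $\sum_{j,\sigma} m$, and $(T - r)\cdot 2^{n-k}$, where $T$ is the number of appearing pairs.

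The main obstacle I anticipate is tuning the parameters $k$ and $m$ so that these three contributions balance to $O(2^n/n)$ with the explicit constant $8$ claimed in the lemma. I expect the right choice to make $m$ and $2^k$ both of order $n$ (so $l \approx \log n$), giving block count $r = 2^{k-l}$, per-block pattern count $|T_j| \le \min(2^m, 2^{n-k})$, and an overall bound of $O(2^n/n)$; an explicit optimization should then deliver the constant $8$ for all sufficiently large $n$. The remaining small-$n$ cases can be absorbed into the trivial CNF bound $\formulathree(f) \le 2^n$, which already satisfies $2^n \le 2^{n+3}/n$ whenever $n \le 8$, completing the base case.
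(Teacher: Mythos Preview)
Your decomposition $f=\bigvee_{j,\sigma}\phi_{j,\sigma}$ is valid, but no choice of $k,l$ brings the clause count down to $O(2^n/n)$. The trouble is the $z$-part of each CNF: writing $[z\in A_{j,\sigma}]$ by excluding every bad $z$ costs $2^{n-k}-|A_{j,\sigma}|$ clauses, and since a depth-$3$ \emph{formula} cannot share these across summands, the total $z$-cost after your partition identity is $(T-r)\,2^{n-k}$. If $m\le n-k$, a generic $f$ realizes all $2^m$ patterns in every block, so $T=r\cdot 2^m$ and this term equals $2^{n-l}(2^m-1)\ge 2^n$ (using $m=2^l>l$). If $m>n-k$, a generic $f$ makes all $2^{n-k}$ columns induce distinct patterns in each block; then already the second term satisfies $\sum_{j,\sigma}(m-|\sigma^{-1}(1)|)=Tm-|f^{-1}(1)|=2^n-|f^{-1}(1)|$, which is $\Theta(2^n)$ for any balanced $f$. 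Your suggested parameters ($m\approx n$, $2^k\approx n$, hence $r\approx 1$) land in this second regime, where in fact the third term alone is of order $(2^{n}/n)^2$. (Incidentally, the addressing contribution is $T(k-l)$, not $r(k-l)$, though this is not the main obstruction.) The block/pattern Lupanov scheme you describe is the one that gives $O(2^n/n)$ \emph{gates} in an unbounded-depth circuit, precisely because the indicators $[z\in A_{j,\sigma}]$ can be computed once and reused; that reuse is unavailable in a depth-$3$ formula.

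The paper's proof uses a different and essential device. It partitions $\binset^D$ (with $D=2^d\approx n/2$) into radius-$1$ Hamming spheres $S_a$ via the perfect Hamming code, and takes the outer OR over the $2^D/D$ centers $a$. For each center there is a \emph{single} CNF, containing (besides $O(D^2)$ clauses for the sphere indicator $\phi_a$) exactly one clause per $w\in\binset^{n-D}$: the key fact is that any Boolean function restricted to a radius-$1$ sphere is itself a single clause $C_a^g$, so the clause $C_a^{f(\cdot,w)}(y)\vee\bigvee_j z_j^{\,1-w_j}$ encodes all of $f$ on $S_a\times\{w\}$ at once. The total size is $(2^D/D)(D^2+2^{n-D})\le 2^{n+3}/n$. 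It is this sphere trick, not the subcube-block decomposition, that delivers the $1/n$ saving for this size measure.
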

A classical theorem by Lupanov \cite{Lup65} states that,
for any function $f$,
the number of literals in the smallest depth-3 formula computing $f$ is at most $O(2^n / \log n)$ (and this is tight).
In contrast, Lemma \ref{lemma:universalbound} deals with the number of 
AND gates at the middle layer.
To the best of our knowledge, for this size measure, the universal upper bound was not studied before;
however, Lupanov's construction can be adapted to our size measure as well, by changing some parameters in the construction.
\begin{proof}
  [Proof of Lemma \ref{lemma:universalbound}]
  The main idea is to cover the whole space by a sphere of diameter 1,
  and then to describe the function inside each sphere by a relatively small CNF formula.
  Take a parameter $D = 2^d \supplement{ < n}$ (which is fixed later) for some $d \in \Nat$.
  A sphere of diameter 1 with center $a$ is denoted by $S_a \subset \binset^D$.
  That is, $S_a$ consists of all the strings $y \in \binset^D$ such that
  $y$ and $a$ differ in exactly one coordinate.
  We use several simple facts from \cite{Lup65}.
  \begin{fact}
    \label{fact:coverbysphere}
    There is some subset $A_D \subset \binset^D$ such that $\ISet{S_a}{a \in A_D}$ partitions $\binset^n$
    (\ie $\coprod_{a \in A_D} S_a = \binset^n$).
  \end{fact}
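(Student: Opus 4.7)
The plan is to construct $A_D$ as the kernel of an appropriate parity-check matrix, turning the partition question into a short linear-algebra verification. (I read the $\binset^n$ in the statement as a typo for $\binset^D$, since $S_a \subset \binset^D$.) Since $D = 2^d$, I will identify the coordinate set $[D]$ with $\mathbb{F}_2^d$ and take $H \in \mathbb{F}_2^{d \times D}$ whose columns are precisely the $2^d$ elements of $\mathbb{F}_2^d$, each appearing exactly once -- so one column is $0$ and the remaining $D-1$ columns enumerate the nonzero elements of $\mathbb{F}_2^d$. Define $A_D := \ker H$. Since the nonzero columns already span $\mathbb{F}_2^d$, the matrix $H$ has rank $d$, giving $|A_D| = 2^{D-d} = 2^D / D$, which matches the size required for spheres of $D$ elements to partition a space of $2^D$ elements.

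The main step will be to show that the map
\[
  \Psi \colon A_D \times [D] \to \binset^D, \qquad (a, j) \mapsto a \oplus e_j,
\]
is a bijection; this is exactly the assertion that every $y \in \binset^D$ belongs to $S_a$ for a unique $a \in A_D$. Since $|A_D \times [D]| = 2^D$, injectivity suffices. If $\Psi(a_1, j_1) = \Psi(a_2, j_2)$, then $a_1 \oplus a_2 = e_{j_1} \oplus e_{j_2}$; the case $j_1 = j_2$ gives $a_1 = a_2$ immediately, while $j_1 \neq j_2$ forces $a_1 \oplus a_2$ to be a weight-$2$ vector in $\ker H$, which in turn forces the $j_1$-th and $j_2$-th columns of $H$ to coincide -- contradicting the distinctness of columns.

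I do not anticipate a genuine obstacle; the one delicate point deserving a sanity check is the zero column of $H$ (unavoidable, since one needs $D = 2^d$ distinct elements of $\mathbb{F}_2^d$ as columns). It does introduce a single weight-$1$ codeword $e_{j_0} \in A_D$, where $j_0$ is the coordinate whose column is $0$, but the collision analysis above only breaks when two \emph{distinct} columns coincide, so this extra codeword is harmless. The ability to afford exactly one zero column is precisely what makes $D = 2^d$ work cleanly, and no further argument is needed beyond the short verification sketched above.
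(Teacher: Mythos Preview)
Your proposal is correct and follows essentially the same route as the paper: the paper also takes $A_D$ to be the kernel of the $d\times 2^d$ matrix $H$ whose columns enumerate all of $\mathbb{F}_2^d$, and then appeals to ``the properties of the Hamming code'' without further detail. Your write-up simply fills in the verification the paper leaves implicit (the injectivity-plus-counting argument for $\Psi$), and your observation about the zero column --- that it creates a weight-$1$ codeword but no problematic weight-$2$ codeword --- is exactly the point that makes the sphere (rather than ball) partition work at length $D=2^d$; you also correctly flag the $\binset^n$/$\binset^D$ typo.
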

  \begin{proof}
    [Proof Sketch]
    Let $H \in \GF2^{d \times 2^d}$ be a $d \times 2^d$ matrix whose $i$th column is
    the $d$-bit binary representation of $i \in \NumSet{2^d}$.
    Define $A_D$ to be the kernel of a linear map $H$ regarded as $H \colon \GF2^{2^d} \to \GF2^d$.
    The fact follows from the properties of the Hamming code \cite{Ham50}.
  \end{proof}
  \begin{fact}
    For any $a \in \binset^D$,
    the characteristic function of $S_a$ can be computed by a CNF formula $\phi_a$ of size at most $D^2$.
  \end{fact}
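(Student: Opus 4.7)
The plan is to encode $S_a$ by its defining Hamming-distance condition as a conjunction of short clauses. Reading $S_a$ as the closed Hamming ball of radius $1$ around $a$ (the interpretation forced by Fact \ref{fact:coverbysphere}, since otherwise the sets could not partition $\binset^D$), a string $y$ lies in $S_a$ iff $y$ and $a$ agree on all but at most one coordinate, equivalently, iff there is no pair of coordinates on which $y$ disagrees with $a$. Each such forbidden pair can be ruled out by a single $2$-clause, and a quick count will give the $D^2$ bound.

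Concretely, for each coordinate $i \in [D]$ I introduce the literal $\ell_i^a$ that evaluates to $1$ exactly when $y_i = a_i$; that is, $\ell_i^a := y_i$ if $a_i = 1$ and $\ell_i^a := \neg y_i$ if $a_i = 0$. Then I take
$$\phi_a \;:=\; \bigwedge_{1 \le i < j \le D} \bigl(\ell_i^a \vee \ell_j^a\bigr).$$
The clause $\ell_i^a \vee \ell_j^a$ is violated precisely when $y_i \ne a_i$ and $y_j \ne a_j$, so $\phi_a(y) = 1$ iff there is no pair of coordinates on which $y$ differs from $a$, iff $y \in S_a$. The number of clauses is $\binom{D}{2} \le D^2$, which gives the claimed bound.

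I do not foresee any substantive obstacle: the entire argument is the observation that the condition ``Hamming distance at most one'' decomposes into $\binom{D}{2}$ pairwise disagreement constraints, each expressible as a single $2$-clause. The only mild subtlety worth flagging is the interpretation of ``sphere of diameter $1$''; for Fact \ref{fact:coverbysphere} to hold, $S_a$ must include $a$ itself along with its $D$ Hamming neighbors, and the construction above is tailored to this reading. (If one insisted on the literal reading ``distance exactly one'', the same idea works by conjoining the additional single clause $\bigvee_i \neg \ell_i^a$ asserting ``distance at least one'', which changes the bound only to $\binom{D}{2} + 1 \le D^2$.)
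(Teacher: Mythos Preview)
Your construction is exactly the one the paper uses: the paper's condition (2) ``for any pair $i<j$, $x_i=a_i$ or $x_j=a_j$'' is your conjunction of $\binom{D}{2}$ two-literal clauses $\ell_i^a \vee \ell_j^a$, and the paper's condition (1) ``there is some $i$ with $x_i\neq a_i$'' is your single additional clause $\bigvee_i \neg\ell_i^a$. So the approach is the same.

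The one point that needs correcting is your reading of $S_a$. You assert that Fact~\ref{fact:coverbysphere} forces the closed-ball interpretation, but this is backwards. The paper defines $S_a$ as the set of strings differing from $a$ in \emph{exactly} one coordinate, and this \emph{does} yield a partition of $\binset^D$: the parity-check matrix $H$ has as columns the $d$-bit representations of all of $0,1,\dots,2^d-1$, including the zero column. For any $y$, the unique center $a\in A_D$ with $d(y,a)=1$ is $a=y+e_i$ where $i$ is the unique index with $H_i=Hy$; since the columns enumerate all of $\GF2^d$ exactly once, such $i$ always exists and is unique (even when $Hy=0$, via the zero column). Thus the spheres of radius exactly $1$ partition $\binset^D$, and the formula you should present as the main construction is the one in your parenthetical, with $\binom{D}{2}+1\le D^2$ clauses, not the ball version.
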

  \begin{proof}
    [Proof Sketch]
    $x \in S_a$ if and only if
    (1) there is some $i \in \NumSet{D}$ such that $x_i \neq a_i$
    and (2) for any pair of $i < j$, $x_i = a_i$ or $x_j = a_j$.
  \end{proof}
  The next fact states that, if restricted to a sphere $S_a$, any function $g$ can be described
  by a single clause $C_a^g$.
  \begin{fact}
    \label{fact:clauseinsphere}
    For any function $g \colon \binset^D \to \binset$ and $a \in \binset^D$, there exists a clause $C_a^g$ such that
    $g(y) = C_a^g(y)$ for any $y \in S_a$.
  \end{fact}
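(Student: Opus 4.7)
The plan is to construct the clause $C_a^g$ explicitly by choosing, for each coordinate $i \in [D]$, at most one literal on variable $x_i$ to include in the disjunction. First I would observe that $S_a$ consists of exactly $D$ points: for each $i \in [D]$, write $y^{(i)}$ for the string obtained from $a$ by flipping its $i$-th bit, so $S_a = \{y^{(1)}, \dots, y^{(D)}\}$. The aim is therefore to produce a single clause that evaluates to $g(y^{(i)})$ at each $y^{(i)}$.

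The key idea is to use only literals that are \emph{falsified} by $a$ itself. For each coordinate $i$, let $\ell_i$ denote the literal $x_i$ if $a_i = 0$ and $\neg x_i$ if $a_i = 1$, so that $\ell_i(a) = 0$ by construction. A one-line check then shows that, for every $j \in [D]$, the literal $\ell_i$ satisfies $\ell_i(y^{(j)}) = 1$ if and only if $i = j$: when $i \neq j$, the string $y^{(j)}$ agrees with $a$ at coordinate $i$, so $\ell_i(y^{(j)}) = \ell_i(a) = 0$; when $i = j$, the $j$-th bit is flipped and $\ell_j$ becomes true. With this in hand, I would simply define
\[
  C_a^g \;:=\; \bigvee_{j \,:\, g(y^{(j)}) = 1} \ell_j,
\]
where an empty disjunction is interpreted as the constant-$0$ clause (relevant if $g \equiv 0$ on $S_a$). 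Evaluating $C_a^g$ at an arbitrary point $y^{(j)} \in S_a$, the previous observation says that exactly one literal, namely $\ell_j$, could contribute a $1$, and it does contribute a $1$ precisely when $j$ is included in the disjunction, i.e.\ when $g(y^{(j)}) = 1$. Hence $C_a^g(y^{(j)}) = g(y^{(j)})$ for every $y \in S_a$, as required.

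Since this is a direct construction rather than an existence argument, there is essentially no obstacle: the only step needing care is the verification that the chosen literals ``light up'' at exactly one point of $S_a$, which is immediate from the fact that each element of $S_a$ differs from $a$ in a single coordinate.
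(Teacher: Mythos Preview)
Your construction is correct and is exactly the one the paper gives: the paper defines $C_a^g(y) := \bigvee_i y_i^{1-a_i}$ over those $i$ with $g(a \oplus e_i)=1$, which is precisely your $\bigvee_{j:\,g(y^{(j)})=1}\ell_j$. You simply supply the verification that the paper leaves implicit.
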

  \begin{proof}
    [Proof Sketch]
    For $b \in \binset$, let $y_i^{b}$ denote a positive literal $y_i$ if $b=1$
    and a negative literal $\neg y_i$ otherwise.
    Define $C_a^g(y) := \bigvee_{i} y_i^{1 - a_i}$ where
    the disjunction is taken over all $i \in \NumSet{D}$ such that 
    $g$ evaluates to $1$ if the $i$th coordinate of $a$ is flipped
    (\ie $(a_1, \dots, 1 - a_i, \dots a_D) \in g^{-1}(1)$).
  \end{proof}
  Using these facts, we can now describe any function $f \colon \binset^n \to \binset$ by a small depth-3 formula.
  Regard $\binset^n = \binset^D \times \binset^{n-D}$.
  For $y \in \binset^D$ and $z \in \binset^{n-D}$,
  \begin{align*}
    f(y, z)
    & = f(y, z) \land \left( \bigvee_{a \in A_D} \phi_a(y) \right)
    \explain{by Fact \ref{fact:coverbysphere}}
    \\ & = \bigvee_{a \in A_D} \phi_a(y) \land f(y, z)
    \\ & = \bigvee_{a \in A_D} \left\{ \phi_a(y) \land \bigwedge_{w \in \binset^{n-D}} \left(  f(y, w) \lor \bigvee_{j\in\NumSet{n-D}} z_j^{1-w_j} \right) \right\}
    \\ & = \bigvee_{a \in A_D} \left\{ \phi_a(y) \land \bigwedge_{w \in \binset^{n-D}} \left(  C_a^{f(\blank, w)}(y) \lor \bigvee_{j\in\NumSet{n-D}} z_j^{1-w_j} \right) \right\}.
    \explain{by Fact \ref{fact:clauseinsphere}}
  \end{align*}
  This is a depth-3 formula of size at most
  $|A_D| \cdot (|\phi_a| + 2^{n-D}) \le 2^D D + 2^n / D$.
  Define $d := \floor{\log (n/2)}$.
  Then, $n / 4 \le D = 2^d \le n / 2$ and hence
  the size is at most $2^{n/2} \cdot n/2 + 4 \cdot 2^n / n \le 2^{n+3} / n$.
  This completes the proof of Lemma \ref{lemma:universalbound}.
\end{proof}

\begin{remark}
  Lemma \ref{lemma:universalbound} is tight up to a constant factor.
  Indeed, a simple counting shows that
  the number of 
  depth-3 formulas of size $s$
  is 
  at most $2^{O(s \log s + s n)}$,
  which is much less than $2^{2^n}$ for $s \ll 2^n / n$.
\end{remark}

\begin{proof}
  [Proof of Theorem \ref{thm:universalapproximator}]
  \begin{enumerate}
    \item
      By Lemma \ref{lemma:universalbound} and Theorem \ref{thm:correlation},
      we obtain $(\cor(f))^{-1} \le \formulathree(f) \le 2^{n+3} / n$.
      Hence, for any distribution $\mu$ on $f^{-1}(1)$ (and in particular the uniform distribution on $f^{-1}(1)$),
      there exists some CNF formula $\phi$ such that
      \[
        |\phi| \le 2^{n+3} / n \cdot \Pr_{x \sim \mu} [\phi(x) = 1].
      \]
      Therefore,
      $\phi$ approximates $f$ with advantage $\epsilon$ for $\epsilon := \Pr_{x \sim \mu} [\phi(x) = 1]$,
      and $|\phi| \le \epsilon \cdot 2^{n+3} / n$.
    \item
      Let $f \colon \binset^n \to \binset$ be a random function.
      That is, for each input $x \in \binset^n$, pick $f(x) \sim \binset$ uniformly at random and independently.
      Fix any CNF formula $\phi$ and advantage $\epsilon$.
      We will bound the probability that $\phi$ approximates $f$ with one-sided error and advantage $\epsilon$.
      \begin{claim}
        $ \Pr_f \left[ \text{ $\lessthan{\phi}{f}$ and $\Pr_{x \sim f^{-1}(1)} [\phi(x) = 1] \ge \epsilon$ } \right] \le 2^{-\epsilon 2^{n-4}}$
      \end{claim}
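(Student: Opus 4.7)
The plan is to split on $s := |\phi^{-1}(1)|$. The event in the claim has two constraints: the containment $\phi^{-1}(1) \subset f^{-1}(1)$, which is expensive when $s$ is large; and the density condition $\Pr_{x \sim f^{-1}(1)}[\phi(x)=1] \ge \epsilon$, which (given the containment) forces $|f^{-1}(1)| \le s/\epsilon$ and is therefore expensive when $s$ is small, since $|f^{-1}(1)|$ concentrates around $2^{n-1}$ for random $f$. Choosing the threshold $s = \epsilon \cdot 2^{n-4}$ balances the two.

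In the high-size case $s \ge \epsilon \cdot 2^{n-4}$, I will discard the density condition and just use the containment: since $f$ is uniformly random, $\Pr_f[\phi^{-1}(1) \subset f^{-1}(1)] = 2^{-s} \le 2^{-\epsilon 2^{n-4}}$, which is already the target bound.

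In the low-size case $s < \epsilon \cdot 2^{n-4}$, I will discard the containment condition and use only the density condition. Combining $|\phi^{-1}(1) \cap f^{-1}(1)| \le s$ with $\Pr_{x \sim f^{-1}(1)}[\phi(x)=1] \ge \epsilon$ yields $|f^{-1}(1)| \le s/\epsilon < 2^{n-4}$. Writing $|f^{-1}(1)| = \sum_{x \in \binset^n} f(x)$ as a sum of $2^n$ i.i.d.\ fair coins with mean $2^{n-1}$, Hoeffding's inequality gives
\[
\Pr_f\bigl[|f^{-1}(1)| < 2^{n-4}\bigr] \le \exp\bigl(-2 (7 \cdot 2^{n-4})^2 / 2^n\bigr) = \exp\bigl(-(49/128)\cdot 2^n\bigr),
\]
which comfortably dominates $2^{-\epsilon 2^{n-4}} = 2^{-(\epsilon/16)\cdot 2^n}$ since $\epsilon \le 1 < 128 \cdot \log_2 e / 49 \cdot 16$.

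There is no real obstacle: both cases are one-line estimates, and the threshold is forced by matching the two exponents. The only care needed is in Case 2, to verify that the absolute Chernoff constant beats $\epsilon/16$ uniformly in $\epsilon \in (0,1]$, which it does. (The union bound over all CNF formulas of size $\le \epsilon 2^{n-7}/n$ that completes the proof of Theorem~\ref{thm:universalapproximator} will then use the counting estimate $2^{O(s \log s + sn)}$ for the number of CNF formulas of size $s$, which is comfortably absorbed by the $2^{-\epsilon 2^{n-4}}$ factor.)
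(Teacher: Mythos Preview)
Your proposal is correct and follows essentially the same approach as the paper: split on $|\phi^{-1}(1)|$, bound the large case by the containment probability $2^{-|\phi^{-1}(1)|}$, and bound the small case by showing the density condition forces $|f^{-1}(1)|$ to be far below its mean, then apply Chernoff/Hoeffding. The paper uses the threshold $\epsilon 2^{n-2}$ rather than your $\epsilon 2^{n-4}$ (yielding $|f^{-1}(1)|\le 2^{n-2}$ in the small case), but this is only a cosmetic difference; your final numerical check is a bit garbled in its parenthesization, though the underlying inequality $(49/128)\log_2 e > 1/16 \ge \epsilon/16$ is certainly valid.
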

      By definition, $\Pr_f [\lessthan{\phi}{f}] = 2^{- |\phi^{-1}(1)|}$.
      This probability is bounded above by $2^{ - \epsilon 2^{n-2}}$ if $|\phi^{-1}(1)| > \epsilon 2^{n-2}$,
      in which case the claim holds.

      It remains to consider the case when $|\phi^{-1}(1)| \le \epsilon 2^{n-2}$.
      First, note that $\Pr_{x \sim f^{-1}(1)} [\phi(x) = 1] \ge \epsilon$ is equivalent to $|\phi^{-1}(1)| \ge \epsilon |f^{-1}(1)|$
      under the assumption that $\lessthan{\phi}{f}$.
      Therefore, the probability in the claim can be bounded above by
      \[
        \Pr\left[ \epsilon |f^{-1}(1)| \le |\phi^{-1}(1)| \right] \le
        \Pr\left[ |f^{-1}(1)| \le 2^{n-2} \right] \le 2^{-2^{n-4}} \le 2^{-\epsilon 2^{n-4}},
      \]
      where the second last inequality follows from the Chernoff bound.
      This completes the proof of the claim.

      Fix any size $s \in \Nat$.
      Since there are at most $3^{ns}$ CNF formulas of size $s$,
      \[
        \Pr_f \left[ \text{ $\exists\, \phi$ of size $s$ such that $\lessthan{\phi}{f}$ and $\Pr_{x \sim f^{-1}(1)} [\phi(x) = 1] \ge \epsilon$  } \right]
        \le 3^{ns} \cdot 2^{-\epsilon 2^{n-4}}.
      \]
      by the union bound.
      Define $s_\epsilon := \epsilon 2^{n-6} / n$.
      Then, for any fixed $\epsilon$,
      the probability that there exists a CNF formula $\phi$ of size $s_\epsilon$
      approximating $f$ with advantage $\epsilon$
      is at most $3^{ns_\epsilon} \cdot 2^{-\epsilon 2^{n-4}} \le 2^{\epsilon 2^{n-5} - \epsilon 2^{n-4}} = 2^{- \epsilon 2^{n-5}}$.

      Let $\mathcal E :=  \ESet { 2^{-n+5}, 2^{-n+6}, \cdots, 2^{0} }$.
      By the union bound over all $\epsilon \in \mathcal E$,
      \begin{align*}
        & \Pr_f \left[ \text{ $\exists\,\epsilon \in \mathcal E,\  \exists\, \phi $ of size $s_\epsilon$ such that $\phi$ approximates $f$ with advantage $\epsilon$ } \right]
        \\ & \le \sum_{i=0}^{n-5} 2^{-2^{i}} < \sum_{i=1}^{\infty} 2^{-i} = 1.
      \end{align*}
      Hence, there exists a function $f$ such that $f$ cannot be approximated
      by any CNF formula of size $s_\epsilon$ with any advantage $\epsilon \in \mathcal E$.
      While this gives us inapproximability for discrete values of $\epsilon \in \mathcal E$,
    we can extend it to an arbitrary advantage $\epsilon \in (0, 1]$ as follows:
      Given an arbitrary $\epsilon \ge 2^{-n+5}$, we take $\epsilon' \in \mathcal E$ such that $\epsilon / 2 \le \epsilon' \le \epsilon$.
      Then, $f$ cannot be approximated by any CNF formula of size $s_\epsilon / 2 \supplement{\le s_{\epsilon'}}$ with any advantage $\epsilon \supplement{\ge \epsilon'}$.
      On the other hand, if $\epsilon < 2^{-n+5}$ then $s_\epsilon < 1$ and hence
      $f$ cannot be approximated by any CNF formula of size at most $s_\epsilon$ (\ie a constant formula)
      with positive advantage,
      as we may assume that $f$ is not constant.
      Therefore, in any case,
    $f$ cannot be approximated by any CNF formula of size $s_\epsilon / 2 = \epsilon 2^{n-7}/n$ with any advantage $\epsilon \in (0, 1]$.
  \end{enumerate}
\end{proof}

\section{Satisfiability Coding Lemma With Width Reduction} \label{sec:sat}
In this section, we modify the satisfiability coding lemma \cite{PPZ99} and
obtain an upper bound on the number of isolated solutions of
a CNF formula.  Here, we say that an assignment $x \in \binset^n$
is an \emph{isolated solution} of a function $\phi$
if $\phi(x) = 1$ and $\phi(y) = 0$ for any adjacent assignment $y$ of $x$
(\ie $x$ and $y$ differ on exactly one coordinate).
We note that an upper bound on the number of isolated solutions
immediately implies a one-sided correlation bound of $\PAR_n$.

Paturi, Pudl{\'{a}}k, and Zane \cite{PPZ99}
developed the satisfiability coding lemma,
which states that
an isolated solution has a short description
(and thus the number of isolated solutions is small).
We say that a randomized algorithm $E$ is a \emph{randomized prefix-free encoding}
if, for any fixed randomness $r$ of $E$, the image of the algorithm $E_r$ that uses $r$ as randomness
is prefix-free (\ie no two strings in the image of $E_r$ contain the other as a prefix).
\begin{lemma}
  [Satisfiability Coding Lemma~\cite{PPZ99}]
  \label{lemma:sat}
  Let $\phi$ be any $k$-CNF formula on $n$ variables,
  and let $T \subset \binset^n$ be the set of isolated solutions of $\phi$.
  Then, 
  there exists a randomized prefix-free encoding $E(\blank; \phi) \colon T \to \binset^*$
  such that $\Exp [ E(x; \phi) ] \le n - n / k$ for any $x \in T$,
  where the expectation is taken over the coin flips
  of $E$.
  In particular, $|T| \le 2^{n - n / k}$.
\end{lemma}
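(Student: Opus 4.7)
The plan is to follow the classical argument from \cite{PPZ99}: exploit the fact that isolated solutions are forced by ``critical clauses'' at every coordinate, and use a random permutation of the variables so that, in expectation, a constant fraction of coordinates can be inferred by the decoder from previously decoded values.

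Concretely, I would first fix $x \in T$ and observe that for every coordinate $i \in [n]$, flipping $x_i$ in $x$ must falsify some clause $C_i^x$ of $\phi$; hence in $C_i^x$ the literal on $x_i$ is the unique literal satisfied by $x$, so once the values of the other variables in $C_i^x$ are known, $x_i$ is determined. Since $\phi$ is a $k$-CNF, $C_i^x$ mentions at most $k$ variables. Now the randomized encoder samples a uniformly random permutation $\pi$ of $[n]$ (as its random coins) and processes coordinates in the order $\pi(1), \pi(2), \ldots, \pi(n)$. When handling $i = \pi(j)$, the encoder inspects $\phi$ together with the already-handled assignment and checks whether some clause of $\phi$ is made unsatisfied \emph{except} by its literal on $x_i$; if so, the encoder outputs nothing and advances, and otherwise it outputs $x_i$. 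The decoder, knowing $\pi$ and $\phi$, performs the same test at each step and either recovers $x_i$ from the forcing clause or reads the next bit of the encoding. This guarantees prefix-freeness: for every fixed $\pi$, the decoder knows deterministically when it has finished reading the $n$ variables, so no codeword is a prefix of another.

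For the length bound, for each fixed $x$ and each coordinate $i$, the coordinate $i$ will be ``free'' (i.e., no bit is emitted) whenever $i$ is the last among the variables of $C_i^x$ in the permutation order, which happens with probability at least $1/|\mathrm{vars}(C_i^x)| \ge 1/k$ over $\pi$. By linearity of expectation, the expected number of emitted bits is at most $n - n/k$. Finally, to derive $|T| \le 2^{n - n/k}$ I would fix $\pi$ and apply Kraft's inequality to the prefix-free code $E_\pi$, giving $\sum_{x \in T} 2^{-|E(x;\phi,\pi)|} \le 1$, then take expectations over $\pi$ and use convexity of $t \mapsto 2^{-t}$ (Jensen's inequality) to conclude $|T| \cdot 2^{-(n - n/k)} \le \sum_{x \in T} 2^{-\mathbb{E}_\pi[|E(x;\phi,\pi)|]} \le \mathbb{E}_\pi \sum_{x \in T} 2^{-|E(x;\phi,\pi)|} \le 1$.

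The only delicate point is verifying that the described scheme is genuinely prefix-free for each fixed $\pi$; the main obstacle is to convince oneself that the encoder and decoder make identical decisions at each step, which reduces to the observation that the decoder's ``is there a clause forcing $x_i$?'' test depends only on $\phi$ and the partial assignment it has already reconstructed, both of which match the encoder's view. Once this is clear, the rest is a short expectation calculation.
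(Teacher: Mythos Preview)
The paper does not give its own proof of this lemma: it is stated with a citation to \cite{PPZ99} and used as a black box in the proof of Theorem~\ref{thm:isolated}. Your proposal is correct and is precisely the original PPZ99 argument (critical clauses for isolated solutions, a uniformly random permutation as the encoder's coins, and the Kraft--Jensen step to pass from the expected code length to the bound on $|T|$), so there is nothing to compare against in the paper itself.
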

In order to derive a depth-3 formula lower bound
of $\PAR_n$,
we need 
an upper bound in terms of the formula size $|\phi|$
instead of the width $k$.
In the context of satisfiability algorithms (\ie decoding algorithms of satisfying assignments),
Schuler~\cite{Sch05} gave a variant of the PPZ algorithm \cite{PPZ99}
that runs in time $2^{n - n / (\log |\phi| + 1)} \poly(n)$ for a CNF formula $\phi$
(instead of the running time $2^{n - n / k} \poly(n)$ of the PPZ algorithm
where $k$ is the width of $\phi$).
We note that Calabro, Impagliazzo and Paturi \cite{CIP06} 
gave another analysis of Schuler's width reduction technique.
Their analysis gives a satisfiability algorithm that runs in
time $2^{n - n / O(\log (|\phi|/n))} \poly(n)$ (see \cite{DH08}).
However, it seems that their analysis does not improve our depth-3 formula lower bound on
the parity function because the ``$O$'' notation hides some factor.
We thus 
incorporate Schuler's width reduction technique
into the satisfiability coding lemma, and obtain the following:
\begin{theorem}
  \label{thm:isolated}
  Let $\phi$ be a CNF formula of size at most $2^s$ on $n$ variables.
  Then the number of isolated solutions of $\phi$
  is at most $2^{n - n / (s + 2) + 1}$.
\end{theorem}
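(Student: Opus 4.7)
The plan is to extend the satisfiability coding lemma (Lemma~\ref{lemma:sat}) to CNF formulas of unbounded width by combining it with Schuler's width-reduction trick. Set $k := s + 2$; call a clause \emph{wide} if its width exceeds $k$, and note that $\phi$ has at most $2^s$ wide clauses. I aim to control the number of isolated solutions via a recurrence that, at each step, reduces either the number of wide clauses or the number of variables.

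The inductive step is as follows. Pick any wide clause $C$, and let $C^{(k)}$ denote the sub-clause consisting of its first $k$ literals. Partition the isolated solutions $T$ as $T = T_A \sqcup T_B$, where $T_A = \{x \in T : x \text{ satisfies } C^{(k)}\}$. Every $x \in T_A$ remains an isolated solution of the formula $\phi_A$ obtained from $\phi$ by replacing $C$ with $C^{(k)}$: $\phi_A$ has one fewer wide clause, and if some $x_i$ has critical clause $C$, then its unique satisfying literal (the one on $x_i$) must also satisfy $C^{(k)}$, so $i \in C^{(k)}$ and flipping $x_i$ falsifies $C^{(k)}$ as well. Every $x \in T_B$ falsifies all $k$ literals of $C^{(k)}$, so the values of those $k$ variables are forced; the projection of $x$ to the remaining $n - k$ variables is an isolated solution of the restricted formula $\phi_B$ obtained by the corresponding fixing.

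Letting $N(n, m)$ denote the maximum number of isolated solutions over all CNFs on $n$ variables with at most $m$ wide clauses, this yields
\[
N(n, m) \le N(n, m - 1) + N(n - k, m), \qquad N(n, 0) \le 2^{n - n/k},
\]
the base case being Lemma~\ref{lemma:sat} applied to a $k$-CNF. Writing $N(n, m) = C(m) \cdot 2^{n - n/k}$, the recurrence becomes $C(m)(1 - 2^{1-k}) \le C(m-1)$, so $C(m) \le (1 - 2^{1-k})^{-m}$. With $k = s + 2$ and $m \le 2^s$ we have $2^{1-k} \le 1/(2m)$, hence $C(m) \le (1 - 1/(2m))^{-m} \le e^{1/2} < 2$, giving $N(n, m) \le 2 \cdot 2^{n - n/(s+2)} = 2^{n - n/(s+2) + 1}$.

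The main obstacle is verifying the preservation of the isolated property in the $T_B$ branch: fixing $k$ variables could in principle satisfy and delete a clause that was critical for some free variable. What rescues the argument is that the fixed values necessarily coincide with $x$'s values on the $C^{(k)}$ variables (both falsify $C^{(k)}$), and by the uniqueness of the satisfying literal in any critical clause $C_i$, no literal of $C_i$ on a fixed variable can evaluate to $1$ under $x$. Thus $C_i$ survives the restriction with the same satisfying literal, and flipping the corresponding free variable still violates it.
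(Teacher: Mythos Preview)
Your proof is correct and arrives at the same bound, but the argument is organized differently from the paper's. The paper builds an explicit randomized prefix-free encoding of the isolated solutions: truncate every wide clause to its first $k$ literals to form a $k$-CNF $\widetilde\phi$; if $x$ satisfies $\widetilde\phi$, encode it via the $k$-CNF satisfiability coding lemma (Lemma~\ref{lemma:sat}); otherwise spend $s$ bits to name the index of a falsified truncated clause, fix the corresponding $k$ variables, and recurse on the restriction. An induction shows the expected code length is at most $n-n/k+1$, and Kraft plus Jensen converts this into the count. You instead set up the counting recurrence $N(n,m)\le N(n,m-1)+N(n-k,m)$, peeling off one wide clause at a time, and solve it in closed form. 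Both arguments rest on Schuler's width-reduction idea; your version is more elementary (no coding machinery), while the paper's makes the information-theoretic content explicit and handles all wide clauses simultaneously rather than one by one.

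Two small points. First, the inequality $(1-1/(2m))^{-m}\le e^{1/2}$ is false: already at $m=1$ the left side equals $2$, and the sequence decreases monotonically from $2$ toward $\sqrt e$ as $m\to\infty$. The correct bound is therefore $\le 2$, which is exactly what the theorem requires, so your conclusion stands. Second, the ``main obstacle'' discussion is unnecessary. In the $T_B$ branch, the restricted formula satisfies $\phi_B(y')=\phi(y)$ whenever $y$ is the lift of $y'$ under the fixed values; since every neighbor of $x'$ lifts to a neighbor of $x$, isolation of $x$ in $\phi$ directly yields isolation of $x'$ in $\phi_B$, with no appeal to critical clauses. Likewise in the $T_A$ branch, the containment $\phi_A^{-1}(1)\subseteq\phi^{-1}(1)$ (because $C^{(k)}$ implies $C$) already forces every neighbor of $x$ to fail $\phi_A$.
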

\begin{proof} 
  We will construct a randomized prefix-free encoding of the isolated solution of $\phi$.
  Let $T$ be the set of isolated solutions of $\phi$.
  Let $k := s + 2$.

  The idea is to cut off long clauses of length greater than $k$ in $\phi$
  and then apply the satisfiability coding lemma.
  We will define
  $\widetilde \phi$
  to be a formula produced by cutting off long clauses in $\phi$
  so that the width of $\widetilde \phi$ is at most $k$.
  The satisfiability coding lemma yields
  a short encoding whose average length is $n - n / k$.
  However, since we cut off long clauses in $\phi$,
  it is possible that
  a satisfying assignment $x$ of $\phi$
  is not necessarily a satisfying assignment of $\widetilde \phi$.
  In this case, we specify the index of the clause in $\widetilde \phi$
  that is not satisfied by $x$.
  While it costs us additional $s$ bits on the code length,
  we can obtain the $k$ bits of information about $x$
  from the fact that $x$ does not satisfy the clause.

  To be more precise, 
  the definition of $\widetilde \phi$ is as follows:
  For each clause $C = l_1 \lor l_2 \lor \dots \lor l_m$,
  define $\widetilde C$ as $\widetilde C := l_1 \lor l_2 \lor \dots \lor l_k$ if $m > k$,
  and $\widetilde C := C$ otherwise.
  (We fix an arbitrary ordering of literals so that
  $\widetilde C$ is uniquely determined.)
  Then,
  for a CNF formula $\phi = \bigwedge_{i=1}^{|\phi|} C_i$,
  let us define a $k$-CNF formula $\widetilde \phi$
  as $\widetilde \phi := \bigwedge_{i=1}^{|\phi|} \widetilde C_i$.

  Now we describe the prefix-free encoding algorithm $\widetilde E(x; \phi)$
  for encoding an isolated solution $x$ of $\phi$.
  If $\widetilde \phi(x) = 1$,
  then apply the satisfiability coding lemma, that is, output $0 \cdot E(x; \phi)$
  (``0'' is a marker indicating the first case).
  Otherwise, 
  there exists a clause $\widetilde C$ in $\widetilde \phi$
  such that $\widetilde C(x) = 0$.
  Let $\widetilde C = \bigvee_{i=1}^k l_i$ be the lexicographically first clause in $\widetilde\phi$
  such that $\widetilde C(x) = 0$.
  (Note that the width of $\widetilde C$ is exactly equal to $k$
  because $\widetilde C \neq C$.)
  Let $i \in \binset^s$ be the binary encoding of the index of $\widetilde C$ in $\widetilde \phi$.
  Let $\phi' := \restr{\phi}{l_1=0, \dots, l_k = 0}$
  (\ie assign the variables in $\widetilde C$ so that $\widetilde C$ is falsified),
  and let $x' \in \binset^{n-k}$ be the part of the assignment $x$ other than
  variables in $l_1, \dots, l_k$.
  Output $1 \cdot i \cdot \widetilde E(x'; \phi')$.
  (That is, recursively call this procedure on input $x'$ and $\phi'$.)

  We claim that
  $\widetilde E(\blank; \phi) \colon T \to \binset^*$ is a (randomized) prefix-free encoding.
  To this end, we describe a decoding algorithm $\widetilde D(z; \phi)$:
  Given an input $z = \widetilde E(x; \phi)$,
  read the first bit $b \in \binset$ of $z$.
  If $b = 0$, then use the decoding algorithm for $E(\blank, \phi)$.
  Otherwise,
  read the next $s$ bits $i \in \binset^s$ of $z$
  so that $z = 1 \cdot i \cdot z'$,
  let $\widetilde C$ denote the $i$th clause in $\widetilde \phi$,
  define $\phi' := \restr{\phi}{\widetilde C = 0}$ as in the encoding algorithm of $\widetilde E$,
  and recursively call $\widetilde D(z'; \phi')$.
  It is easy to see that
  $\widetilde D(\widetilde E(x; \phi); \phi) = x$
  for any $x \in T$,
  and thus $\widetilde E$ is a prefix-free encoding.

  Finally, we analyze the average code length of the prefix-free encoding $\widetilde E$.
  We claim, by induction on the number of recursive calls of $\widetilde E$, 
  that $\Exp \left[ \widetilde E(x; \phi) \right] \le n - n/k + 1$
  for any isolated solution $x$ of $\phi$.
  Here, the expectation is taken over the randomness of $\widetilde E$ (\ie the randomness of $E$).
  If $\widetilde \phi(x) = 1$ (\ie there is no recursive call),
  then $|\widetilde E(x; \phi)| = 1 + |E(x; \phi)|$,
  and thus $\Exp \left[ |\widetilde E(x; \phi)| \right] \le 1 + n - n / k$
  by Lemma \ref{lemma:sat}.
  Otherwise,
  we have
  $|\widetilde E(x; \phi)| = 1 + s + |\widetilde E(x'; \phi')|$,
  where $x'$ and $\phi'$ are defined as in the encoding algorithm.
  Since $x'$ is an isolated solution of $\phi'$,
  we can apply the induction hypothesis for $x' \in \binset^{n-k}$ and $\phi'$.
  Hence,
  \begin{align*}
    \Exp[|\widetilde E(x; \phi)|] &\le 1 + s + (n - k) - (n - k) / k + 1
    \\ &= 1 + n - n / k,
  \end{align*}
  which completes the induction.

  We have proved that 
  $\Exp_r \left[ \widetilde E_r(x; \phi) \right] \le n - n/k + 1$
  for any $x \in T$,
  where $r$ denotes the internal randomness of $\widetilde E$.
  In particular, 
  the same holds even if the expectation is
  taken over $x \in T$ as well as internal randomness of $E$.
  By averaging, there exists some internal randomness $r'$ of $E$
  such that $\Exp_{x \usample T} \left[ \widetilde E_{{r'}}(x; \phi) \right] \le n - n/k + 1$.
  Thus, $\widetilde E_{{r'}}(\blank; \phi) \colon T \to \binset^*$ is
  a deterministic prefix-free encoding whose
  average code length is at most $n - n / k + 1$.
  By Kraft's inequality and Jensen's inequality,
  we have
  \begin{align*}
    1 & \ge \sum_{x \usample T} 2^{-|\widetilde E_{{r'}}(x; \phi)|}
    = |T| \cdot \Exp_{x \usample T} \left[ 2^{-|\widetilde E_{{r'}}(x; \phi)|} \right]
    \ge |T| \cdot 2^{ \Exp_{x \usample T} \left[ -|\widetilde E_{{r'}}(x; \phi)| \right] }
    \ge |T| \cdot 2^{- (n - n/k + 1)}
  \end{align*}
  Hence, $|T| \le 2^{n - n / k + 1}$.
\end{proof}

\begin{corollary}
  \label{cor:parityapprox}
  Let $n \in \Nat$ and $0 \le \epsilon < \frac{1}{2}$.
  Let $s$ be the minimum size of a CNF formula $\phi$ that computes $\PAR_n$
  all but an $\epsilon$ fraction of inputs
  with one-sided error in the sense that $\phi^{-1}(1) \subset \PAR_n^{-1}(1)$.
  Then, for $k := \log \frac{2}{1 - 2\epsilon}$ and $k' = \floor{k}$,
  \[
    2^{\frac{n}{k + 1} - 3} \le s \le k' 2^{\ceil{\frac{n}{k'}} - 1}.
  \]
\end{corollary}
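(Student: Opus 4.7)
The plan is to prove the two bounds separately, with the lower bound coming directly from Theorem~\ref{thm:isolated} and the upper bound from a simple block-partition construction.

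For the lower bound, the key observation I would establish first is that any CNF formula $\phi$ satisfying $\phi^{-1}(1) \subset \PAR_n^{-1}(1)$ has the property that \emph{every} satisfying assignment of $\phi$ is an isolated solution: if $\phi(x) = 1$ then $\PAR_n(x) = 1$, and every Hamming-neighbor $y$ of $x$ has $\PAR_n(y) = 0$ and so $\phi(y) = 0$. Next, since $\phi$ never errs on $\PAR_n^{-1}(0)$, the $\epsilon$-fraction of allowed errors must lie entirely inside $\PAR_n^{-1}(1)$, so $|\phi^{-1}(1)| \ge |\PAR_n^{-1}(1)| - \epsilon \cdot 2^n = 2^{n-1}(1 - 2\epsilon)$. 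Combining these two facts, $\phi$ has at least $2^{n-1}(1-2\epsilon)$ isolated solutions.

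To finish the lower bound, I would apply Theorem~\ref{thm:isolated} with $t := \ceil{\log s}$, so that $|\phi| \le 2^t$ and the number of isolated solutions is bounded above by $2^{n - n/(t+2) + 1}$. Combining with the lower bound above yields $2^{n-1}(1-2\epsilon) \le 2^{n - n/(t+2) + 1}$, which rearranges to $n/(t+2) \le 2 + \log(1/(1-2\epsilon)) = k+1$, giving $t \ge n/(k+1) - 2$. Since $t \le \log s + 1$, we obtain $\log s \ge n/(k+1) - 3$, i.e., $s \ge 2^{n/(k+1) - 3}$.

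For the upper bound, I would partition the $n$ variables into $k'$ disjoint blocks $B_1, \dots, B_{k'}$ of size at most $\ceil{n/k'}$ each (with sizes summing to $n$), and pick signs $\sigma_1, \dots, \sigma_{k'} \in \binset$ with $\sigma_1 \oplus \dots \oplus \sigma_{k'} = 1$. Define $\phi$ to be the AND, over all $i$, of the CNF that enforces ``$\PAR(B_i) = \sigma_i$'': this is a width-$|B_i|$ CNF with exactly $2^{|B_i|-1}$ clauses (one per disallowed assignment of $B_i$). Since the $\sigma_i$ XOR to $1$, we have $\phi \le \PAR_n$, and the total number of clauses is $\sum_i 2^{|B_i|-1} \le k' \cdot 2^{\ceil{n/k'} - 1}$. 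Finally, $\phi$ accepts $\prod_i 2^{|B_i|-1} = 2^{n - k'}$ inputs, so $\Pr[\phi(x) \neq \PAR_n(x)] = \tfrac{1}{2} - 2^{-k'}$; the inequality $k' = \floor{k} \le k = \log(2/(1-2\epsilon))$ yields $2^{-k'} \ge (1-2\epsilon)/2$, which gives error at most $\epsilon$ as required.

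I do not foresee a genuine obstacle here. The only mildly delicate step is the cumulative rounding in the lower bound (the slack from $\ceil{\log s}$ versus $\log s$) that accounts for the ``$-3$'' rather than ``$-2$'' in the exponent; for the upper bound, one has to be careful when $k' \nmid n$, but using unequal blocks of size at most $\ceil{n/k'}$ handles this cleanly without changing the final bound.
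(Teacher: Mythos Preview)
Your proposal is correct and follows essentially the same approach as the paper's proof: the lower bound uses the observation that every accepting input of $\phi$ is isolated together with Theorem~\ref{thm:isolated}, and the upper bound is exactly the block-partition construction (the paper fixes the sign pattern to $(1,0,\dots,0)$, but your general $\sigma_i$'s with $\bigoplus \sigma_i = 1$ are equivalent). The arithmetic with the ceiling $t=\lceil\log s\rceil$ matches the paper's derivation, and your handling of unequal block sizes when $k'\nmid n$ is the same as the paper's.
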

\begin{proof} 
  We first prove the lower bound.
  Since $\phi^{-1}(1) \subset \PAR_n^{-1}(1)$, any satisfying assignment of $\phi$ is an isolated solution.
  Thus, by Lemma \ref{lemma:sat}, 
  $\Pr_{x \usample \PAR_n^{-1}(1)} [ \phi(x) = 1 ] \le 2^{n - n / (\ceil{\log s} + 2) + 1} / 2^{n - 1}
  \le 2^{- n / (\log s + 3) + 2}$.
  By the assumption,
  \[\epsilon \ge \Pr_{x \usample \binset^n} [ \phi(x) \neq \PAR_n(x) ] = \frac{1}{2} \Pr_{x \usample \PAR_n^{-1}(1)} [ \phi(x) = 0 ] \ge \frac{1}{2}(1 - 2^{-n/(\log s + 3) + 2}),\]
  and thus $s \ge 2^{\frac{n}{k + 1} - 3}$.
  
  For the upper bound, we divide an $n$-bit input into
  $k'$ blocks the $i$th block of which contains $n_i$ bits
  so that $\sum_{i=1}^{k'} n_i = n$ and $n_i \le \ceil{n / k'}$.
  For the $i$th block, we add $2^{n_i-1}$ clauses that check
  whether the parity of the block is $0$ if $i\neq1$ and $1$ if $i=1$.
  The constructed CNF formula $\phi$ is of size $\sum_{i=1}^{k'} 2^{n_i-1} \le k' 2^{\ceil{\frac{n}{k'}} - 1}$
  and does not accept any input whose parity is even.
  The number of the satisfying assignments of $\phi$ is exactly equal to $2^{n-k'}$,
  and hence $\Pr_x [\phi(x) \neq \PAR_n(x)] = \frac{1}{2} (1 - 2^{-k' + 1}) \le \epsilon$.
\end{proof}

Now we can determine the depth-3
formula size for computing $\PAR_n$.
\begin{theorem}
  \label{thm:paritymain}
  $\formulathree(\PAR_n) = \widetilde \Theta(2^{2 \sqrt n})$.
  More precisely, 
  $2^{ 2 \sqrt n - 5} \le \formulathree(\PAR_n) \le O( 2^{ 2\sqrt n} \sqrt n)$.
\end{theorem}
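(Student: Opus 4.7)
The plan is to combine the duality theorem (Theorem \ref{thm:correlation}) with Theorem \ref{thm:isolated} for the lower bound, and to give an explicit block-partition construction for the matching upper bound. Both directions are already sketched in the introduction; this proposal fleshes out the arithmetic.

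For the lower bound I would first invoke Corollary \ref{cor:hard_dist_of_parity} so that it suffices to upper bound $\cor[\mu](\PAR_n)$ where $\mu$ is the uniform distribution on $\PAR_n^{-1}(1)$. The key observation is that for any $\phi \in \CNF{\PAR_n}$, every satisfying assignment of $\phi$ is an isolated solution: flipping any single coordinate changes the parity and so moves into $\PAR_n^{-1}(0) \subseteq \phi^{-1}(0)$. Setting $s := \lceil \log |\phi| \rceil$, Theorem \ref{thm:isolated} yields $|\phi^{-1}(1)| \le 2^{n - n/(s+2) + 1}$. Combined with $|\phi| \ge 2^{s-1}$ (trivial when $s=0$), this gives
\[
  \frac{\Pr_{x \sim \mu}[\phi(x) = 1]}{|\phi|} \;=\; \frac{|\phi^{-1}(1)|/2^{n-1}}{|\phi|} \;\le\; 2^{\,3 - n/(s+2) - s}.
\]
The AM--GM inequality $(s+2) + n/(s+2) \ge 2\sqrt{n}$ forces $s + n/(s+2) \ge 2\sqrt{n} - 2$, so $\cor(\PAR_n) \le 2^{5 - 2\sqrt{n}}$, and Theorem \ref{thm:correlation} then yields $\formulathree(\PAR_n) \ge (\cor(\PAR_n))^{-1} \ge 2^{2\sqrt{n} - 5}$.

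For the upper bound, I partition the $n$ input variables into $k := \lceil \sqrt{n} \rceil$ blocks of size at most $\lceil n/k \rceil = O(\sqrt n)$. Since $\PAR_n$ equals the XOR of the block parities, $\PAR_n(x) = 1$ iff the vector of block parities lies in a fixed set $S \subseteq \binset^k$ of size $2^{k-1}$. For each signature $b \in S$ I build a CNF $\psi_b$ that accepts exactly those $x$ whose $i$th block has parity $b_i$; enforcing the parity of a single block of size $\ell$ takes $2^{\ell-1}$ clauses, so $|\psi_b| \le k \cdot 2^{\lceil n/k \rceil - 1}$. Then $\bigvee_{b \in S} \psi_b$ computes $\PAR_n$ exactly, with depth-3 formula size at most $2^{k-1} \cdot k \cdot 2^{\lceil n/k \rceil - 1} = O(\sqrt{n}\cdot 2^{2\sqrt{n}})$, matching the claimed bound.

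The main obstacle is really concentrated in Theorem \ref{thm:isolated}, where the width-reduction argument has to be carried out with additive overhead of exactly $+2$ in the denominator rather than an unspecified constant; once that is granted, the remainder is the AM--GM computation on one side and the block-parity construction on the other, both of which are essentially routine. The one delicate point in the AM--GM step is handling the regime of very small $|\phi|$ (say $s \in \{0,1\}$), but there the trivial bound $|\phi^{-1}(1)|/2^{n-1} \le 2^{2-n/2}$ is already much stronger than $2^{5-2\sqrt{n}}$ for all sufficiently large $n$, so it requires no separate argument.
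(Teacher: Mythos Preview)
Your proposal is correct and essentially identical to the paper's proof: the lower bound combines Theorem~\ref{thm:isolated} with the observation that every satisfying assignment of $\phi\in\CNF{\PAR_n}$ is isolated, then applies AM--GM to $n/(s+2)+(s+2)$ and invokes Theorem~\ref{thm:correlation}, while the upper bound is the same block-partition construction (the paper parameterizes by block size $k=\sqrt n$ rather than block count, but the arithmetic is the same). The only cosmetic difference is that the paper does not explicitly cite Corollary~\ref{cor:hard_dist_of_parity}, using instead the weaker inequality $\cor(\PAR_n)\le\cor[\mu](\PAR_n)$, which already suffices.
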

\begin{proof} 
  Let $\phi$ denote any CNF formula such that $\phi^{-1}(1) \subset \PAR_n^{-1}(1)$.
  Let $T$ denote the set of satisfying assignments of $\phi$.
  Since $T$ corresponds to the set of isolated solutions of $\phi$,
  by Theorem \ref{thm:isolated}, it follows that $|T| \le 2^{n - n / (\ceil{\log |\phi|}+2) + 1} \le
  2^{n - n / (\log |\phi| + 3) + 1}$.
  Therefore,
  for the uniform distribution $\mu$ on $\PAR_n^{-1}(1)$,
  the one-sided correlation per size is 
  \begin{align*}
    \Pr_{x \sample \mu} \left[ \phi(x) = 1 \right] / |\phi|
    & = |T|\, /\, ( 2^{n-1} \cdot |\phi| )
    \\ &
    \le
    2^{5 - n / (\log |\phi| + 3) - (\log |\phi| + 3)}
    \le
    2^{5 - 2 \sqrt{n}},
  \end{align*}
  where the last inequality follows from the arithmetic and geometric means.
  That is, $\cor(\PAR_n) \le \cor[\mu](\PAR_n) \le 2^{5 - 2 \sqrt{n}}$
  and thus $2^{ 2 \sqrt n - 5} \le (\cor(\PAR_n))^{-1} \le \formulathree(\PAR_n)$
  by Theorem \ref{thm:correlation}.

  A depth-3 formula for computing $\PAR_n$
  can be constructed as follows:
  Divide $n$ variables into $n / k$ blocks each of which contains $k$ variables.
  For each string $z \in \binset^{n / k}$
  such that $\PAR_{n/k} (z) = 1$,
  we construct a CNF formula $\phi_z$ that checks
  whether, for each $i \in \NumSet{n / k}$, the parity of the $i$th block is $z_i$.
  Then it is easy to see that $\PAR_{n} = \bigvee_{z} \phi_z$.
  Since $|\phi_z| \le n / k  \cdot 2^{k - 1}$,
  we have $\formulathree(\PAR_n) \le n / k \cdot 2^{k + n / k - 1}$.
  Choosing $k := \sqrt{n}$, we obtain $\formulathree(\PAR_n) \le 2^{2\sqrt n - 1} \sqrt n$.
\end{proof}

\section{Concluding Remarks} \label{sec:conclusions}
In this paper, we
proved that depth-3 circuit size and depth-3 formula size are quadratically different.
Rossman \cite{Ros15} separated the depth-$d$ circuits from the depth-$d$ formulas for
a sufficiently large $d$ (say, $d \ge 108$).
These results leave a mysterious regime $3 < d \ll 108$ as an interesting open problem.
Can one prove that depth-$d$ circuits are more powerful than depth-$d$ formulas
for such a regime, and in particular, for $d = 4$?

Can one prove a depth-3 formula lower bound of $\PAR_n$ that is tight up to
a constant factor?
Can one extend our lower bound of approximating $\PAR_n$ by CNF formulas with one-sided error
to the case of two-sided error?
What is the asymptotic behavior of $T(n, n/2)$?
The techniques from the literature of Tur\'an's theorem might be helpful here.

Similarly, the techniques of O'Donnell and Wimmer \cite{OW07}
who studied approximation of the majority function by DNF formulas
(under the uniform distribution)
might be helpful for studying $T(n, n/2)$ and the depth-3 circuit size of majority.
We believe that the explicit link established between depth-3 formula size and
the complexity of approximation by depth-2
is useful to advance both of these lines of research further.
Can one find another application of the duality?

\bibliographystyle{alpha}
\bibliography{bib}

\end{document}